\newtheorem{theorem}{Theorem}
\newtheorem{lem}{Lemma}
\newtheorem{prop}{Proposition}
\newtheorem{definition}{Definition}
\newtheorem{example}{Example}
\newtheorem{remark}{Remark}
\newtheorem{cons}{Construction}
\newtheorem{coro}{Corollary}
\newcommand{\ba}{\mathbf{a}}
\newcommand{\bs}{\mathbf{s}}
\newcommand{\bt}{\mathbf{t}}
\newcommand{\bu}{\mathbf{u}}
\begin{document}

\title{Constructions of Optimal Frequency-Hopping Sequences with Controlled Minimum Gaps}
\author{
		Chen Li
		\thanks{C. Li and X. Zeng are with the Key Laboratory of Intelligent Sensing System and Security, Ministry of Education, and Hubei Key Laboratory of Applied Mathematics, Faculty of Mathematics and Statistics, Hubei University, Wuhan 430062, China (e-mail: lichen2022@stu.hubu.edu.cn; xiangyongzeng@aliyun.com).},
		Chunlei Li
		\thanks{C. Li and D. Li are with the Department of Informatics, University of Bergen, Bergen, N-5020, Norway (e-mail: chunlei.li@uib.no; dian.li@uib.no).},
		Xiangyong Zeng,	
		and
		Dian Li
}
\date{}
\maketitle

\begin{quote}
{\small {\bf Abstract:}
Frequency-hopping sequences (FHSs) with low Hamming correlation and wide gaps significantly contribute to the anti-interference performance in FH communication systems.
This paper investigates FHSs with optimal Hamming correlation and controlled minimum gaps.
We start with the discussion of the upper bounds on the minimum gaps of uniform FHSs and then
propose a general construction of optimal uniform wide-gap FHSs with length $2l$ and $3l$, which includes the work by Li et al. in IEEE Trans. Inf. Theory, vol. 68, no. 1, 2022 as a special case.
Furthermore, we present a recursive construction of FHSs with length $2l$,
which concatenate shorter sequences of known minimum gaps. It is shown that
the resulting FHSs have the same Hamming correlation as the concatenation-ordering sequences.
As applications,  several known optimal FHSs are used to produce optimal FHSs with controlled minimum gaps.
}

{\small {\bf Keywords:} } frequency-hopping sequence, Hamming correlation, permutation, uniform.
\end{quote}

\section{Introduction} \label{intro}

Frequency-hopping multiple-access (FHMA) systems are extensively employed in modern communication technologies, including ultra-wideband, radar, and Bluetooth \cite{MJ94}. In FH communication systems, mutual interference arises when the same frequencies are used simultaneously. Frequency-hopping sequences (FHSs) are usually finite-length sequences, where each value corresponds to an available frequency in a given time slot. In 1974, Lempel and Greenberger \cite{AH74} introduced the periodic Hamming correlation to assess the anti-interference capabilities of FHSs. From a system security perspective, it is desirable that the sequences exhibit low Hamming correlation, namely, the Hamming cross-correlation between different FHSs and the non-trivial auto-correlation of each FHS should be low.
Two generic lower bounds due to Lempel and Greenberger \cite{AH74} and due to Peng and Fan \cite{DP04} were proposed in the literature.
The past decades witnessed several algebraic or combinatorial constructions of optimal FHSs with respect to the Lempel-Greenberger bound, e.g., \cite{RY04, WC05, CM07, GY09, JK09, JK10, XH12, XH13},
and optimal FHS sets with respect to the Peng-Fan bound, e.g., \cite{WC05, CM07, CJ08, CR09, GY09, XH12, XH13}.

	The rapid development of modern communication systems has challenged the performance of traditional FH communication. Increasingly strained spectrum resources, continuously evolving and complex interference techniques, and intricate, dynamic communication link environments now pose significant limitations.
To address these challenges, FHSs with large or controlled minimum gaps between adjacent frequencies,  referred to as wide-gap FHSs (WGFHSs), have become highly desirable \cite{WZ02, Chen83, Mei94, HY09}. These wide gaps enhance resilience, as they provide effective avoidance of interference and minimize its impact on adjacent frequency channels, even when it occurs.
Consequently, WGFHSs exhibit strong performance against single-frequency narrow-band interference, partial band jamming, tracking interference, and multipath fading. This has driven substantial research interest in WGFHS spread spectrum systems \cite{PCY19, QH22, LC22, PC22}.
The lower bounds on the Hamming correlation of WGFHSs were further investigated \cite{PCY19};
the WGFHSs were generalized to wide-gap-zone FHSs; and the optimality of such WGFHSs and their construction were also studied \cite{QH22}.
Additionally, optimal WGFHSs have been examined in the low-hopping zone \cite{LC22}.
Particularly, in a recent work \cite{PC22}, the authors introduced the construction of two families of optimal WGFHSs over $\mathbb{Z}_l = \{0,1,\dots, l-1\}$ with controlled minimum gap $d-1$,
which concatenated two or three decimated sequences of the trivial sequence $(0,1,\dots, l-1)$ for certain positive integers $l$ and $d$.

This paper aims to advance the study of FHSs with desirable Hamming correlation and minimum gaps.
We first consider the upper bounds on the minimum gaps of uniform FHSs, and then propose two constructions of optimal FHSs. The first construction generalises the work of \cite{PC22} from tuples $(d,d+1)$, $(d,d+1,d+2)$ to $(d_1, d_2)$, $(d_1, d_2, d_3)$, respectively, with $d_1, d_2, d_3$ derived from the difference unit set of $\mathbb{Z}_l$.
In the second construction, we loosen the condition in Theorem \ref{cons I} and consider the cases where $\gcd(l, d_1) = \gcd(l, d_2) = \gcd(l, d_2-d_1) = m \geq 2$.
In this construction two $m\times l_1$ matrices, where $l_1=l/m$, are obtained from the sequence $(0,1,\dots, l-1)$  indexed by $d_1, d_2$, respectively; and  the $2m$ rows of the two matrices are concatenated, in a special order determined by a permutation $\bm{\pi}$ on $\{0,1,\dots, 2m-1\}$, into a sequence $\bu$ of length $2l$.
We show that $\bu$ has the same Hamming correlation as the concatenation-ordering sequence $\bm{\pi}_m$, where $\bm{\pi}_m=(\pi(i)\mod{m})_{0\leq i<2m}$.
As a result, in order to construct an FHS $\bu$ of length $2l$ with optimal Hamming correlation and a controlled minimum gap, one can use an optimal FHS $\bm{\pi}_m$ of length $2m$ as a concatenation-ordering sequence, choose proper integers $d_1$ and $d_2$ to obtain $2m$ subsequences with controlled minimum gaps, which will be concatenated into a sequence of length $2l$.
Furthermore, by appropriately selecting subsequences and sequence sets, this construction method can also be used to construct optimal no-hit-zone multi-timeslot WGFHS sets \cite{XC25}.
Finally, we demonstrate the second construction by considering the optimal FHSs from \cite{JK09, CJ08, JK10}, and compare the sequences constructed in \cite{JK09, CJ08, JK10} with those derived from Theorem \ref{H=H}.

The remainder of this paper is organized as follows. Section 2 introduces some preliminaries about FHSs and their (periodic) Hamming correlation. In Section 3, new upper bounds on the minimum gaps of uniform FHSs and new constructions of optimal uniform WGFHSs are presented. In addition, the applications of using the recursive construction are discussed.
Section 4 concludes the study.

\section{Preliminaries}

	This section introduces some basic definitions and auxiliary results about FHSs.

Let $F=\{f_0, f_1, \dots, f_{l-1}\}$ be a set of $l$ available frequencies.
Let $F^n$ be the set of frequency-hopping sequences (FHSs) of length $n$ over $F$.
For any two FHSs $\mathbf{s}=(s_0, s_1, \dots, s_{n-1})$ and $\mathbf{t}=(t_0, t_1, \dots, t_{n-1})$ in $F^n$, their \emph{(periodic) Hamming cross-correlation} at time delay $\tau$ is defined by
$$
H_{\mathbf{s},\mathbf{t}}(\tau)
=\sum\limits_{i=0}^{n-1}h[s_i, t_{i+\tau}], \,\,\,0\leq \tau< n,
$$
where $h[a,b]=1$ if $a=b$, $h[a,b]=0$ otherwise, and the subscripts are taken modulo $n$.
When $\mathbf{s}=\mathbf{t}$, we call $H_{\mathbf{s},\mathbf{s}}(\tau)$ the \emph{Hamming autocorrelation} of $\mathbf{s}$ and denote it by $H_{\mathbf{s}}(\tau)$.
The Hamming cross-correlation $H_{\mathbf{s},\mathbf{t}}(\tau)$ is equal to $n-{\rm d}_H(\bs, L^{\tau}(\bt))$,
where ${\rm d}_H(\bs, L^{\tau}(\bt))$ denotes the Hamming distance between the sequences $\bs$ and $L^{\tau}(\bt)$, which is the left cyclic shift of $\bt$ by $\tau$ positions.
 Hence there have been some discussions on FHSs in the context of coding theory \cite{CR09, XC20}. For two sequences $\mathbf{s}$, $\mathbf{t}\in F^n$, let
$$
H_{\mathbf{s}}=\max\limits_{0< \tau<n}H_{\mathbf{s}}(\tau) \quad\text{and} \quad
H_{\mathbf{s},\mathbf{t}}=\max\limits_{0\leq \tau<n}H_{\mathbf{s},\mathbf{t}}(\tau).
$$
In 1974, Lempel and Greenberger \cite{AH74} gave the criteria of optimality of FHSs:
 if $H_{\mathbf{s}}\leq H_{\mathbf{s}'}$ for all $\mathbf{s}'\in F^n$, then $\mathbf{s}$ is called an \emph{optimal FHS}.
They also established a lower bound on $H_{\mathbf{s}}$ as follows.
\begin{lem}(Lempel-Greenberger bound \cite{AH74})\label{LG}
For every FHS $\mathbf{s}$ of length $n$ over an alphabet $F$ of size $\ell$, we have
\begin{equation}\label{lg-1}
H_{\mathbf{s}}\geq \left\lceil\frac{(n-\varepsilon)(n+\varepsilon-\ell)}{\ell(n-1)}\right\rceil,
\end{equation}
where $\varepsilon$ is the least nonnegative residue of $n$ modulo $\ell$ and $\lceil x\rceil$ represents the ceiling function of a real number $x$.
\end{lem}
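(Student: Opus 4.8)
The plan is to bound $H_{\mathbf{s}}$ from below by an averaging (double-counting) argument over all cyclic shifts, which converts the worst-case quantity $H_{\mathbf{s}}=\max_{0<\tau<n}H_{\mathbf{s}}(\tau)$ into a tractable second moment of the frequency multiplicities. First I would introduce, for each $f\in F$, the multiplicity $m_f=\#\{\,i : 0\le i<n,\ s_i=f\,\}$, so that $\sum_{f\in F}m_f=n$. The crucial step is the identity
$$\sum_{\tau=0}^{n-1}H_{\mathbf{s}}(\tau)=\sum_{f\in F}m_f^{2},$$
obtained by interchanging the order of summation: $\sum_{\tau=0}^{n-1}H_{\mathbf{s}}(\tau)=\sum_{\tau=0}^{n-1}\sum_{i=0}^{n-1}h[s_i,s_{i+\tau}]$ counts exactly the ordered pairs $(i,j)\in\{0,\dots,n-1\}^2$ with $s_i=s_j$ (for each fixed $i$, the index $j=i+\tau$ runs over all residues mod $n$), and grouping such pairs according to their common symbol yields $\sum_{f}m_f^{2}$.

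Next, since $H_{\mathbf{s}}(0)=n$ and each of the remaining $n-1$ shifts contributes at most $H_{\mathbf{s}}$, the identity gives
$$(n-1)\,H_{\mathbf{s}}\ \ge\ \sum_{\tau=1}^{n-1}H_{\mathbf{s}}(\tau)\ =\ \sum_{f\in F}m_f^{2}-n.$$
It then remains to minimize $\sum_{f}m_f^{2}$ subject to $\sum_{f}m_f=n$ over $\ell$ nonnegative integers. By convexity (equivalently, by a smoothing argument: replacing a pair $m\ge m'+2$ by $m-1,m'+1$ strictly decreases the sum of squares) the minimum is attained in the balanced case, namely when $\varepsilon$ of the $m_f$ equal $q+1$ and the other $\ell-\varepsilon$ equal $q$, where $n=\ell q+\varepsilon$ with $0\le\varepsilon<\ell$. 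This yields $\sum_{f}m_f^{2}\ge\varepsilon(q+1)^{2}+(\ell-\varepsilon)q^{2}$, and a short algebraic simplification using $q=(n-\varepsilon)/\ell$ rewrites the right-hand side as $\tfrac{(n-\varepsilon)(n+\varepsilon)}{\ell}+\varepsilon$. Hence $\sum_{f}m_f^{2}-n\ge\tfrac{(n-\varepsilon)(n+\varepsilon-\ell)}{\ell}$, and dividing by $n-1$ gives $H_{\mathbf{s}}\ge\tfrac{(n-\varepsilon)(n+\varepsilon-\ell)}{\ell(n-1)}$; since $H_{\mathbf{s}}$ is a nonnegative integer, we may apply the ceiling to conclude \eqref{lg-1}.

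I do not expect a genuine obstacle: the argument is routine once the double-counting identity is isolated. The only points requiring care are the convexity/smoothing step that pins down the balanced configuration as the minimizer of $\sum_f m_f^2$, and the bookkeeping to simplify $\varepsilon(q+1)^2+(\ell-\varepsilon)q^2$ into the form $\tfrac{(n-\varepsilon)(n+\varepsilon)}{\ell}+\varepsilon$. (In the degenerate range $n<\ell$ one has $\varepsilon=n$, so the bound reads $H_{\mathbf{s}}\ge 0$ and holds trivially.) The essential idea to highlight in the write-up is that averaging the Hamming autocorrelation over all shifts is precisely what trades the intractable maximum for the second moment of the symbol distribution.
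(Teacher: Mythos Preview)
Your argument is correct and is precisely the classical double-counting proof due to Lempel and Greenberger: the identity $\sum_{\tau=0}^{n-1}H_{\mathbf{s}}(\tau)=\sum_{f}m_f^{2}$, followed by the convexity/smoothing minimization of $\sum_f m_f^2$ under $\sum_f m_f=n$, and the algebra reducing $\varepsilon(q+1)^2+(\ell-\varepsilon)q^2-n$ to $(n-\varepsilon)(n+\varepsilon-\ell)/\ell$ are all sound. Note, however, that the paper does not supply its own proof of this lemma; it is quoted as a known result from \cite{AH74}, so there is no in-paper argument to compare against---your write-up simply reproduces the original approach.
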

Assume $n = kl + \varepsilon$ with $0\leq \varepsilon<l$. It can be verified that the Lempel-Greenberger bound
becomes \cite{RY04}
\[
H_{\bs} \geq
\begin{cases}
k, & \text{ if }\, n > l, \\
0, & \text{ if }\, n \leq l.
\end{cases}
\]
 An FHS is said to be \emph{optimal} if it achieves the Lempel-Greenberger bound  \cite{AH74}.
In the derivation of the Lempel-Greenberger bound, it can be seen that an optimal FHS is required to take frequencies as uniformly as possible. We recall \textit{uniform sequences} below.

\begin{definition}(\cite{XC10})
	Let $\mathbf{s}=(s_0, s_1, \dots, s_{n-1})$ be an FHS over $F=\{f_0,f_1,\dots, f_{\ell-1}\}$, and $N_{f_k}(\bs)$ the number of $f_k$ in the sequence $\bs$.
	Then $\bs$ is said to be uniform if
	\[
	\max_{0\leq k<\ell}N_{f_k}(\bs) -\min_{0\leq k<\ell}N_{f_k}(\bs) = \begin{cases}
	0, & \text{if} \,\, \, \ell \mid n,
	\\
	1, & \text{otherwise}.
	\end{cases}
	\]
\end{definition}

In conventional applications, such as FH-CDMA systems, FHSs are required to have low Hamming correlation.
In addition, the minimum gaps of FHSs have gained more attention recently \cite{WF23, PCY19, QH22, LC22, PC22}. We recall this notion below \cite{WZ02, Chen83, Mei94}.

\begin{definition}
The minimum gap of an FHS  $\mathbf{s}=(s_0, s_1, \dots, s_{n-1}) $ over $F$ is defined as
$$
g = \min_{0\leq i<n-1}\{|s_{i+1}-s_i|, |s_{n-1}-s_0|\}-1.
$$
\end{definition}
By taking into consideration the minimum gap, the authors of \cite{PCY19} improved the Lempel-Greenberger bound as follows.
\begin{lem}(\cite{PCY19})\label{WGLG}
For a WGFHS $\mathbf{s}$ of length $n$ over an alphabet $F$ of size $\ell$, we have
\begin{equation}\label{lg-3}
H_{\mathbf{s}}\geq \left\lceil\frac{(n-\varepsilon)(n+\varepsilon-\ell)}{\ell(n-3)}\right\rceil,
\end{equation}
where $\varepsilon$ is the least nonnegative residue of $n$ modulo $\ell$.
\end{lem}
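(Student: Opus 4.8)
The plan is to run the classical double-counting argument behind the Lempel--Greenberger bound (Lemma \ref{LG}), but to use the wide-gap hypothesis to discard two additional autocorrelation values, which is exactly what replaces the denominator $n-1$ by $n-3$. Write $n = Q\ell + \varepsilon$ with $0 \le \varepsilon < \ell$, put $N_k := N_{f_k}(\mathbf{s})$ for $0 \le k < \ell$, so that $\sum_{k=0}^{\ell-1} N_k = n$, and record the standard identity
$$
\sum_{\tau=0}^{n-1} H_{\mathbf{s}}(\tau) \;=\; \sum_{k=0}^{\ell-1} N_k^2 ,
$$
obtained by counting, for each frequency $f_k$, the ordered pairs of positions $(i,j)$ with $s_i = s_j = f_k$ (the reduction $j - i \bmod n$ being the delay). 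In particular $H_{\mathbf{s}}(0) = n$.

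The new input is that a WGFHS has minimum gap $g \ge 1$ (indeed $g \ge 0$ already suffices), hence $|s_{i+1} - s_i| \ge 1$ for $0 \le i < n-1$ and $|s_{n-1} - s_0| \ge 1$; thus no two cyclically adjacent entries coincide. Therefore $H_{\mathbf{s}}(1) = \sum_{i=0}^{n-1} h[s_i, s_{i+1}] = 0$, and likewise $H_{\mathbf{s}}(n-1) = \sum_{i=0}^{n-1} h[s_i, s_{i-1}] = 0$, since the delay-$(n-1)$ pairs are precisely the cyclically adjacent pairs read backwards. Assuming $n \ge 4$, the three delays $0$, $1$, $n-1$ are distinct, and each of the remaining $n-3$ delays contributes at most $H_{\mathbf{s}}$, so
$$
\sum_{k=0}^{\ell-1} N_k^2 \;=\; n + H_{\mathbf{s}}(1) + H_{\mathbf{s}}(n-1) + \sum_{\tau=2}^{n-2} H_{\mathbf{s}}(\tau) \;\le\; n + (n-3)\,H_{\mathbf{s}} .
$$

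For the lower bound on $\sum_k N_k^2$ I would invoke the usual smoothing/convexity fact: among $\ell$ nonnegative integers with fixed sum $n = Q\ell + \varepsilon$, the sum of squares is minimized when $\varepsilon$ of them equal $Q+1$ and the remaining $\ell-\varepsilon$ equal $Q$, giving
$$
\sum_{k=0}^{\ell-1} N_k^2 \;\ge\; \varepsilon (Q+1)^2 + (\ell-\varepsilon) Q^2 \;=\; \frac{(n-\varepsilon)(n+\varepsilon-\ell)}{\ell} + n ,
$$
the last equality being a direct rearrangement using $n - \varepsilon = Q\ell$. Combining the two displays cancels the additive $n$ and yields $(n-3)\,H_{\mathbf{s}} \ge \frac{(n-\varepsilon)(n+\varepsilon-\ell)}{\ell}$; since $H_{\mathbf{s}}$ is a nonnegative integer, passing to the ceiling gives \eqref{lg-3}.

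The counting identity and the convexity estimate are routine; the one step that carries the real content is the new one --- verifying $H_{\mathbf{s}}(1) = H_{\mathbf{s}}(n-1) = 0$ from the gap condition and confirming that the delays $0, 1, n-1$ are genuinely distinct, which is precisely the origin of the "$n-3$". I would also dispose of the degenerate cases ($n \le \ell$, where the right-hand side of \eqref{lg-3} is $\le 0$, and $n \le 3$) separately, as there the statement is either vacuous or immediate.
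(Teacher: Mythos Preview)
Your proof is correct. Note, however, that the paper does not supply its own proof of this lemma: it is quoted verbatim from \cite{PCY19} and used as a black box, so there is no in-paper argument to compare against. Your double-counting argument---reusing the Lempel--Greenberger identity $\sum_\tau H_{\mathbf s}(\tau)=\sum_k N_k^2$ and exploiting the gap condition to kill $H_{\mathbf s}(1)=H_{\mathbf s}(n-1)=0$, thereby replacing $n-1$ by $n-3$ in the denominator---is exactly the intended mechanism, and is presumably how \cite{PCY19} proves it as well. The algebra, the convexity step, and your handling of the degenerate cases are all in order.
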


Notice that the Hamming correlation between two sequences in $F^n$ is determined by the number of the same components in the two sequences.
In the context of Hamming correlation, the alphabet $F=\{f_0,f_1,\dots, f_{l-1}\}$ can be treated as $F=\{0,1,\dots, l-1\}$.
The simplified alphabet $F=\{0,1,\dots, l-1\}$ can, to some extent, reflect the minimum gap of sequences in $F^n$.
For a frequency alphabet $F$ of size $l$, one can consider the sorted version $F' = \{f'_0,f'_1,\dots, f'_{l-1}\}$ with $f'_0<f'_1<\dots<f'_{l-1}$  and $f' = \min \{ |f'_{i+1}-f'_{i}|: 0\leq i \leq l-1 \}$, where the subscripts are taken modulo $l$.
Given a sequence $\bs=(s_0,s_1,\dots, s_{n-1})$ over $\{0,1,\dots, l-1\}$ with minimum gap $g$,
one may define $\bs'=(f'_{s_0}, f'_{s_1},  \dots, f'_{s_{n-1}})$.
It is easily seen that
the minimum gap of the sequence $\bs'$ is at least $(g+1)\cdot f'-1$.
Following the conventional notation in the literature \cite{AH74,DP04,PC22}, this paper denotes the frequency alphabet $F$ of size $l$ as $\{0,1,\dots,l-1\}$ and use the following notation:
\begin{itemize}[itemsep=-3pt]
	\item  the frequency alphabet $F$ is assumed as the residue ring $\mathbb{Z}_l$;
	\item  $\mathbf{a} \Vert \mathbf{b}$ denotes the concatenation of sequences $\mathbf{a}$ and $\mathbf{b}$;
	\item $(n, l, \lambda)$ denotes an FHS of length $n$ over $\mathbb{Z}_l$ having Hamming correlation $\lambda$;
    \item $(c)_d$ denotes the least nonnegative residue of $c$ modulo $d$;
    \item $\bm{\pi}$ denotes a permutation on $\{0,1,\dots, 2m-1\}$.
\end{itemize}

\section{Constructions of Optimal WGFHSs}

This section starts with the discussion on minimum gaps of FHSs, and then presents two general constructions of FHSs that achieve optimal Hamming correlation and have controlled minimum gaps.

\subsection{Minimum gaps of FHSs}

Note that for a uniform sequence $\bs=(s_0,s_1,\dots, s_{n-1})$ over $\mathbb{Z}_l$ with $n=kl$, the lower bound in Lemma \ref{WGLG}
is the same as that in Lemma \ref{LG} since both reduce to $H_{\mathbf{s}}\geq k.$
As a matter of fact, for a WGFHS, its minimum gap does not play a role in the lower bound in Lemma \ref{WGLG},
which appears insufficient to assess the optimality of WGFHSs when the minimum gap needs to be considered.
Such insufficiency seems to stem from the independence of the two metrics:
 the Hamming correlation is only concerned with the number of the same entries in sequences while the minimum gap inspects the actual gaps between consecutive entries.
For better assessing the optimality of WGFHSs, we study upper bounds on the minimum gaps of uniform FHSs below.

\begin{prop}\label{Prop1}
Let $\mathbf{s}=(s_0, s_1, \dots, s_{n-1})$ be a uniform FHS over $\mathbb{Z}_\ell$ with minimum gap $g$. Then
\[
g\leq
\begin{cases}
\frac{\ell}{2}-1, & \text{if $\ell \nmid n$ and $2 \mid \gcd(\ell, n)$},\\
\lfloor\frac{\ell-1}{2}\rfloor-1, & \text{otherwise},
\end{cases}
\]
where $\lfloor x \rfloor$ represents the floor function of a real number $x$. Moreover, the above upper bounds are reachable.
\end{prop}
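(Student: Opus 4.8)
The plan is to bound the minimum gap $g$ by a counting/pigeonhole argument on how the frequencies in $\mathbb{Z}_\ell$ can be visited by a uniform sequence, and then to exhibit explicit sequences meeting the bounds. First I would observe that if $\bs$ has minimum gap $g$, then any two cyclically adjacent entries $s_i, s_{i+1}$ (indices mod $n$, including the wrap-around pair $s_{n-1}, s_0$) satisfy $|s_{i+1}-s_i|\geq g+1$. Consider the set $G=\{0,1,\dots,\ell-1\}$ of frequencies actually available; since $\bs$ is uniform, every frequency in $\mathbb{Z}_\ell$ is used at least once (when $\ell\mid n$ each appears exactly $n/\ell$ times; when $\ell\nmid n$ each appears $\lfloor n/\ell\rfloor$ or $\lceil n/\ell\rceil$ times, both $\geq 1$ as soon as $n\geq\ell$; the degenerate small-$n$ cases can be checked directly). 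In particular both the value $0$ and the value $\ell-1$ occur in $\bs$. The key local constraint is that a frequency $a$ can only be followed (or preceded) by a frequency in $\mathbb{Z}_\ell\setminus\{a-g,\dots,a+g\}$, i.e. outside an interval of $2g+1$ forbidden values (intersected with $[0,\ell-1]$, and with no wrap-around since these are integers, not residues, for the gap condition).

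The heart of the argument is to look at the extreme frequencies. Suppose some entry of $\bs$ equals $0$. Its two cyclic neighbours must be $\geq g+1$. More usefully, partition $\mathbb{Z}_\ell$ into the "low band" $L=\{0,1,\dots,g\}$ and the "high band" $H=\{\ell-1-g,\dots,\ell-1\}$ (these are disjoint exactly when $2g+2\leq\ell$, i.e. $g\leq \ell/2-1$, which will essentially force the bound). Any entry lying in $L$ must have both cyclic neighbours outside $L$ in a strong sense — in fact an entry equal to $0$ forces neighbours in $\{g+1,\dots,\ell-1\}$, and more generally an entry in the low band forces neighbours that are "far to the right." I would make this quantitative by the following device: since $\bs$ visits every residue, walk along the cyclic sequence and track the quantity $s_i$; each step changes it by at least $g+1$ in absolute value, yet $s_i$ stays in $[0,\ell-1]$, so the walk must turn around frequently. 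Counting the number of "ascents" versus "descents" around the full cycle (which must balance) and using that each ascent/descent has magnitude $\geq g+1$ while the total range is $\leq \ell-1$, one gets that in any maximal monotone run the value can move by at most $\ell-1$, hence such a run has at most $\lfloor(\ell-1)/(g+1)\rfloor$ steps; combining this with uniformity (each residue hit, and the "turning points" being confined to within $g$ of the endpoints $0$ and $\ell-1$) pins down $g+1\leq \ell/2$ in general. The parity refinement — getting $g\le \ell/2-1$ rather than $g\le\lfloor(\ell-1)/2\rfloor-1$ precisely when $2\mid\gcd(\ell,n)$ — comes from a separate mod-$2$ obstruction: I would show that if $g=\lceil\ell/2\rceil-1$ is odd-forced, then every step of the cyclic walk has a fixed parity of displacement, so $s_i \bmod 2$ alternates or is constant along the cycle; going around the cycle of length $n$ this is consistent only if $n$ is even, and a finer look at which residues can appear shows uniformity fails unless additionally $2\mid\gcd(\ell,n)$ — in the excluded "otherwise" case this costs one more unit of gap.

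For the sharpness ("the above upper bounds are reachable"), I would write down explicit uniform WGFHSs. For the case $\ell$ even, $\ell\mid n$ is not required; take the interleaving pattern $(0,\ell/2,1,\ell/2+1,2,\ell/2+2,\dots,\ell/2-1,\ell-1)$, which is a permutation of $\mathbb{Z}_\ell$ with all consecutive gaps (cyclically) equal to $\ell/2$ or $\ell/2-1$, so its minimum gap is $\ell/2-1$; then concatenate $k=n/\ell$ suitable cyclic shifts / reversals of it to reach length $n$ while keeping it uniform and keeping the junction gaps large — one checks the boundary gaps between blocks also satisfy the bound. This handles the first branch when moreover $2\mid n$ so that $2\mid\gcd(\ell,n)$. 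For the "otherwise" branch (either $\ell$ odd, or $\ell$ even but $\gcd(\ell,n)$ odd, or $\ell\nmid n$), I would give an analogous explicit construction whose consecutive differences are all within $\{\lfloor(\ell-1)/2\rfloor, \lfloor(\ell-1)/2\rfloor+1,\dots\}$ in absolute value but never smaller than $\lfloor(\ell-1)/2\rfloor$, e.g. the map $i\mapsto (i\cdot \lfloor(\ell+1)/2\rfloor) \bmod \ell$ arranged over a length-$n$ uniform sequence; a direct check that $\gcd$ of that multiplier with $\ell$ is $1$ shows it is a permutation of $\mathbb{Z}_\ell$, and its consecutive gaps are exactly $\lfloor(\ell-1)/2\rfloor$ or $\lceil(\ell-1)/2\rceil$ up to the choice of representative, giving minimum gap $\lfloor(\ell-1)/2\rfloor-1$. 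The main obstacle I anticipate is not the extremal construction but the lower-bound counting in the parity-sensitive regime: cleanly separating the three sub-cases hidden inside "otherwise" and showing in each that pushing $g$ up by one more unit is incompatible with uniformity (as opposed to merely with being a permutation) will require the careful mod-$2$ / mod-$\gcd(\ell,n)$ bookkeeping sketched above, and that is where I would spend most of the effort.
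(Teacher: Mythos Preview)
Your upper-bound sketch is more complicated than necessary and, in the parity-sensitive part, misconceived. The paper gets the coarse bound in one line: since $\mathbf{s}$ is uniform, the value $\lfloor \ell/2\rfloor$ occurs at some position $j$, and its cyclic neighbour $s_{j+1}$ satisfies $|s_{j+1}-\lfloor \ell/2\rfloor|\le \lfloor \ell/2\rfloor$, so $g+1\le \lfloor \ell/2\rfloor$. No walk, no monotone-run counting. For the refinement with $\ell$ even, the obstruction is \emph{not} about $s_i\bmod 2$; steps of magnitude in $[\ell/2,\ell-1]$ do not have fixed parity. The correct argument is the bipartition you yourself set up but then abandon: with $A=\{0,\dots,\ell/2-1\}$ and $B=\{\ell/2,\dots,\ell-1\}$, a step of size $\ge \ell/2$ forces adjacent entries to lie in different halves, so the sequence alternates between $A$ and $B$ around the cycle. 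That is impossible when $n$ is odd (the wrap-around pair lands in the same half), and when $\ell\mid n$ it forces the pattern $0,\ell/2,0,\ell/2,\dots$ at the extreme frequencies, violating uniformity. Your ``$s_i\bmod 2$ alternates'' claim does not do this job.

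Your sharpness constructions also have concrete errors. The interleaving $(0,\ell/2,1,\ell/2+1,\dots,\ell/2-1,\ell-1)$ has steps like $\ell/2\to 1$ of size $\ell/2-1$, so its minimum gap is $\ell/2-2$, not $\ell/2-1$; it meets the ``otherwise'' bound, not the first branch. Moreover, the first branch requires $\ell\nmid n$, so ``concatenate $n/\ell$ copies'' is not available. The paper's construction for $g=\ell/2-1$ is genuinely more delicate: writing $n=q\ell\pm r$ with $r$ even, it alternates blocks of the form $(j,\ell/2+j,\dots)$ of lengths $2q$ and $2q+1$ in a carefully chosen order so that every step, including all block junctions and the wrap-around, has size at least $\ell/2$. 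Your proposal does not contain this idea, and without it the tightness claim in the first branch is unproved.
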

\begin{proof}
When $l$ is odd, $g<|s_i-\lfloor\frac{l}{2}\rfloor|\leq \lfloor\frac{l}{2}\rfloor$, i.e. $g \leq \lfloor\frac{l-1}{2}\rfloor-1$.
Furthermore, there exist sequences with minimum gap $g$ reaching $\lfloor\frac{l-1}{2}\rfloor-1$.
Take $\mathbf{s}=(0, d, \dots, (n-1)d)$ with $d = \frac{l-1}{2}$ over $\mathbb{Z}_l$.
Since $\gcd(l, l-1) = \gcd(l, \frac{l-1}{2}) = 1$, the sequence $\mathbf{s}$ is uniform. For $0\leq i \leq n-2$, $|s_{i+1}-s_i| = d$ and $|s_{n-1}-s_0| = (n-1)d\equiv l- d \mod l$.
This implies $g = \frac{l-1}{2}-1$,
thereby, the upper bound $\lfloor\frac{l-1}{2}\rfloor-1$ is tight when $l$ is odd.

When $l$ is even, assume $g = \frac{l}{2}-1$, the frequencies $0$ and $\frac{l}{2}$ can only appear in the sequence in the form of $(0, \frac{l}{2}, 0, \frac{l}{2}, \dots, 0, \frac{l}{2}, 0)$. Clearly, such sequence is not uniform for $l \mid n$. Hence the minimum gap satisfies $g<\frac{l}{2}-1$. Below we shall show that there exist uniform sequences with minimum gap $g$ reaching $\frac{l}{2}-2$.

Let $d = \frac{l}{2}-1$, $\gcd(l,d) = m$ and $n_1 = \frac{n}{m}$. Define a sequence
$$\mathbf{s} = (0, d, \dots, (n_1-1)d, 1, d+1, \dots, (n_1-1)d+1, \dots, m-1, d+m-1, \dots, (n_1-1)d+m-1)$$
over $\mathbb{Z}_l$. We have
\[
\begin{split}
& \min\{|s_{i+1}-s_i|, |s_{n-1}-s_0|: 0\leq i \leq n-2\}
\\= &\min\{d, (n_1-1)d-1, (n_1-1)d+m-1\}
\\ = &\min\{d, l-d-1, l-d+m-1\}
\\ = & \,d,
\end{split}
\]
where $(n_1-1)d = \frac{d}{m}\cdot n - d \equiv l-d \mod l$.
This implies that the upper bound $\frac{l}{2}-2 = \lfloor\frac{l-1}{2}\rfloor-1$ is reachable when  $l \mid n$ and $l$ is even.

Now we only need to discuss the case $l\nmid n$ for even $l$.
Clearly $g \leq \frac{l}{2}-1$ since $|s_{i}-\frac{l}{2}|\leq \frac{l}{2}$ for any $s_i\in \mathbb{Z}_l$ and $0\leq i \leq n-1$.
 Below, we shall show that $\frac{l}{2}-1$ is the tight upper bound on $g$ when $n$ is even and $\frac{l}{2}-2$ is the tight upper bound on $g$ when $n$ is odd by providing explicit sequences.

$\mathbf{Case\,\, 1}$. For even integers $n$, $l$ and $l \nmid n$, there exist a positive integer $q$ and an even number $2\leq r\leq \frac{l}{2}$ such that $n=q l\pm r$.
Since the discussions of $n=q l+r$ and $n=q l-r$ are similar, we will only focus on $n=q l+r$.
Take $d=\frac{l}{2}$ and define $d$ sequences as follows:
	$$
	\begin{aligned}
	\bu^j &= (j, d+j, \dots, j, d+j, j), \quad j\in \{1\} \cup \{0, 2, 4, \dots, r-4\},\\
	\mathbf{v}^k &= (d+k, k, \dots, d+k, k, d+k), \quad k\in \{ r-2, d-1\} \cup \{3, 5, 7, \dots, r-3\}, \\
	\mathbf{w}^t &= (d+t, t, \dots, d+t, t), \quad t\in \{r-1, r, \dots, d-2\}, \\
	\end{aligned}
	$$
where the sequences $\bu^j$ and $\mathbf{v}^k$ have length $2q+1$ and $\mathbf{w}^t$ has length $2q$.
In particular, if $r = 2$, then $j = 0$ and $k = d-1$. According to the value of $r$, we define sequences $\mathbf{s}$ as follows:
$$
	\begin{aligned}
    \mathbf{s} &= \bu^0 \Vert \mathbf{w}^{r-1} \Vert \mathbf{w}^{r} \Vert \cdots \Vert \mathbf{w}^{d-2} \Vert \mathbf{v}^{d-1} \,\, \text{for} \,\, r=2;\\
    \mathbf{s} &= \bu^0 \Vert \mathbf{v}^{r-2} \Vert \mathbf{u}^{1} \Vert \mathbf{w}^{r-1} \Vert \mathbf{w}^{r} \Vert \cdots \Vert \mathbf{w}^{d-2} \Vert \mathbf{v}^{d-1} \,\, \text{for} \,\, r=4;\\
    \mathbf{s} &= \bu^0 \Vert \mathbf{v}^{r-2} \Vert \mathbf{u}^{r-4} \Vert \mathbf{v}^{r-3} \Vert \mathbf{u}^{r-6} \Vert \mathbf{v}^{r-5} \Vert \cdots \Vert \mathbf{u}^{2} \Vert \mathbf{v}^{3} \Vert \mathbf{u}^{1} \Vert \mathbf{w}^{r-1} \Vert \mathbf{w}^{r} \Vert \cdots \Vert \mathbf{w}^{d-2} \Vert \mathbf{v}^{d-1} \,\, \text{for} \,\, r>4.
\end{aligned}
	$$
For $j$, $k$ and $t$ as specified above, we know that $j$ and $d+k$ occur exactly $q+1$ times in the sequence $\mathbf{s}$, and $d+j$, $k$, $t$ and $d+t$ occur exactly $q$ times. This means that $\mathbf{s}$ is uniform. It's easy to verify that $\min\{|s_{n-1}-s_0|, |s_{i+1}-s_i|: 0\leq i \leq n-2\} = \frac{l}{2}$, i.e., $g = \frac{l}{2}-1$.
Hence $\frac{l}{2}-1$ is the tight upper bound on $g$ when $l \nmid n$ and $2 \mid \gcd(l, n)$.

$\mathbf{Case\,\, 2}$.
       Assuming there exists a sequence $\mathbf{s}=(s_0, s_1, \dots, s_{n-1})$ with $g=\frac{l}{2}-1$ over $\mathbb{Z}_l$, where $n$ is odd and $l$ is even.
Define two sets $A = \{0, 1, \dots, \frac{l}{2}-1\}$ and $B = \{\frac{l}{2}, \frac{l}{2}+1, \dots, l-1\}$.
To ensure $g=\frac{l}{2}-1$, adjacent entries in the sequence $\mathbf{s}$ cannot belong to the same set. Without loss of generality, let $s_0 \in A$ and $s_1 \in B$. Since $n$ is odd, we have $s_{n-1} \in A$. Thus, $|s_{n-1}-s_0|\leq \frac{l}{2}-1$, which leads to a contradiction.
Below, we demonstrate the existence of uniform sequences with $g=\frac{l}{2}-2$.

Let $\mathbf{s} = (s_0, s_1, \dots, s_{n-2})$ be a uniform FHS as defined in Case 1.
Inserting $\frac{l}{2}$ or $\frac{l}{2}-1$ at the beginning or the end of the sequence $\mathbf{s}$ does not change its uniformity, and the minimum gap of the new sequence is $\frac{l}{2}-2$.
Hence $\lfloor\frac{l-1}{2}\rfloor-1$ is the tight upper bound of $g$ when $l$ is even,  $n$ is odd and $l \nmid n$.
\end{proof}

In the following subsections, we will propose two constructions of FHSs of which the minimum gaps can be easily controlled.

\subsection{The first construction}

This subsection presents a general construction of optimal WGFHSs which include the following FHSs in \cite{PC22} as special cases.

\begin{prop} (\cite{PC22}) \label{Prop_PC22_1}
	 Let $\ell$ and $d$ be two positive integers with $1<d<$ $\frac{\ell}{2}$ and $\operatorname{gcd}(\ell, d)=\operatorname{gcd}(\ell, d+1)=1$.
	Define two sequences $\mathbf{s}=\left(s_0, s_1, \dots, s_{\ell-1}\right)$ and $\mathbf{t}=\left(t_0, t_1, \dots, t_{\ell-1}\right)$ as follows:
	$$
	\begin{aligned}
	s_i & =(i \cdot d)_{\ell}, \quad 0 \leq i<\ell, \\
	t_i & =(i \cdot(d+1)+1)_{\ell}, \quad 0 \leq i<\ell.
	\end{aligned}
	$$
	 Then the sequence
	$\mathbf{v} = \bs \Vert \bt$ is an optimal $(2\ell, \ell, 2)$-FHS with minimum gap $d-1$.
\end{prop}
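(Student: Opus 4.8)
The plan is to establish three things: that $\mathbf{v}=\mathbf{s}\Vert\mathbf{t}$ is an FHS of length $2\ell$ over $\mathbb{Z}_\ell$ (immediate from the definitions of $\mathbf{s}$ and $\mathbf{t}$), that its minimum gap equals $d-1$, and that $H_{\mathbf{v}}=2$. For the optimality claim, note that $n=2\ell>\ell$ with $\varepsilon=0$, so Lemma~\ref{LG} gives $H_{\mathbf{v}}\ge 2$; hence it suffices to prove $H_{\mathbf{v}}(\tau)\le 2$ for every $0<\tau<2\ell$.

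For the minimum gap, I would examine the consecutive absolute differences of $\mathbf{v}$ in four groups. Inside $\mathbf{s}$ one has $s_{i+1}-s_i\equiv d\pmod{\ell}$, hence $|s_{i+1}-s_i|\in\{d,\ell-d\}$; inside $\mathbf{t}$ one has $t_{i+1}-t_i\equiv d+1\pmod{\ell}$, hence $|t_{i+1}-t_i|\in\{d+1,\ell-d-1\}$; at the junction $|t_0-s_{\ell-1}|=|1-(\ell-d)|=\ell-d-1$; and at the wrap-around $|t_{\ell-1}-s_0|=\ell-d$. Since $1<d<\ell/2$, i.e.\ $\ell\ge 2d+1$, each of these quantities is at least $d$, while $|s_1-s_0|=d$. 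Thus the minimum of all consecutive absolute differences of $\mathbf{v}$ is exactly $d$, so its minimum gap is $d-1$.

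For the Hamming autocorrelation I would first handle $\tau=\ell$: writing $\{0,\dots,2\ell-1\}=\{0,\dots,\ell-1\}\cup\{\ell,\dots,2\ell-1\}$ and using $v_{i+\ell}=v_{i-\ell}$ on the second block yields $H_{\mathbf{v}}(\ell)=2\sum_{i=0}^{\ell-1}h[s_i,t_i]$, and $s_i=t_i$ holds iff $id\equiv i(d+1)+1\pmod{\ell}$, i.e.\ iff $i\equiv -1\pmod{\ell}$, which happens exactly once; so $H_{\mathbf{v}}(\ell)=2$. For $1\le\tau\le\ell-1$ I would partition $\{0,\dots,2\ell-1\}$ into the four intervals on which the pair $\bigl(i,\ (i+\tau)\bmod 2\ell\bigr)$ lies entirely in the $\mathbf{s}$-block, crosses from the $\mathbf{s}$-block into the $\mathbf{t}$-block, lies entirely in the $\mathbf{t}$-block, or wraps from the $\mathbf{t}$-block back into the $\mathbf{s}$-block. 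On the first interval a coincidence $v_i=v_{i+\tau}$ forces $\tau d\equiv 0\pmod{\ell}$, impossible since $\gcd(\ell,d)=1$; on the third it forces $\tau(d+1)\equiv 0\pmod{\ell}$, impossible since $\gcd(\ell,d+1)=1$. On each of the two crossing intervals, substituting the block offsets $i\mapsto i-\ell$ and $i\mapsto i-2\ell$ reduces a coincidence to a single linear congruence in the local index, which has at most one solution in the relevant range, so those two intervals contribute at most $1$ each. Summing gives $H_{\mathbf{v}}(\tau)\le 2$, and the identity $H_{\mathbf{v}}(\tau)=H_{\mathbf{v}}(2\ell-\tau)$ extends the bound to $\ell<\tau<2\ell$. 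Hence $H_{\mathbf{v}}=2$, which meets the bound of Lemma~\ref{LG}, so $\mathbf{v}$ is an optimal $(2\ell,\ell,2)$-FHS with minimum gap $d-1$.

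The main obstacle is the bookkeeping in the autocorrelation step: correctly delimiting the four index intervals modulo $2\ell$ and tracking the offsets $i\mapsto i-\ell$ and $i\mapsto i-2\ell$ when translating a coincidence $v_i=v_{i+\tau}$ into a congruence. Once the two coprimality hypotheses and $1<d<\ell/2$ are invoked, each remaining verification is a short direct computation.
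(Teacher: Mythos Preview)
Your proof is correct and follows essentially the same strategy as the paper. Note that the paper does not prove Proposition~\ref{Prop_PC22_1} directly (it is cited from \cite{PC22}); the relevant comparison is with the paper's proof of Theorem~\ref{cons I}, which generalizes it. There the two-sequence case is declared analogous to the three-sequence case, and that case is handled by the same block decomposition you use: split $H_{\mathbf{v}}(\tau)$ into within-block sums (which vanish by the coprimality of the decimation step) and crossing sums (each contributing at most $1$). The only cosmetic difference is that the paper packages the ``at most one coincidence'' fact as Proposition~\ref{Prop_Cao_3} and then invokes it, whereas you derive the relevant linear congruence directly; your route is slightly more self-contained but otherwise identical in substance. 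Your minimum-gap computation also mirrors the paper's analysis of the junction and wrap-around positions in the proof of Theorem~\ref{cons I}.
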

\begin{prop} (\cite{PC22})\label{Prop_PC22_2}
	Let $\ell$ and $d$ be two positive integers with $1<d<$ $\frac{\ell-1}{2}$ and $\operatorname{gcd}(\ell, d)=\operatorname{gcd}(\ell, d+1)=\gcd(\ell, d+2)=1$.
	Define three sequences $\mathbf{s}=\left(s_0, s_1, \dots, s_{\ell-1}\right)$, $\mathbf{t}=\left(t_0, t_1, \dots, t_{\ell-1}\right)$ and $\mathbf{u}=\left(u_0, u_1, \dots, u_{\ell-1}\right)$ as follows:
	$$
	\begin{aligned}
	s_i & =(i \cdot d )_{\ell}, \quad 0 \leq i<\ell, \\
	t_i & =(i \cdot(d+1)+1)_{\ell}, \quad 0 \leq i<\ell, \\
	u_i & =(i \cdot (d+2) +2)_{\ell}, \quad 0 \leq i<\ell. \\
	\end{aligned}
	$$
	 Then the sequence
	$\mathbf{v} = \bs \Vert \bt \Vert \mathbf{u} $ is an optimal $(3\ell, \ell, 3)$-FHS with minimum gap $d-1$.
\end{prop}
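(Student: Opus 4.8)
The plan is to verify three properties of $\mathbf{v}=\bs\Vert\bt\Vert\mathbf{u}$ — that it is uniform, that its minimum gap equals $d-1$, and that $H_{\mathbf{v}}=3$ — after which optimality is automatic, since for a sequence of length $n=3\ell>\ell$ over $\mathbb{Z}_\ell$ the Lempel--Greenberger bound reads $H_{\mathbf{v}}\ge 3$. A remark used throughout is that $\ell$ is necessarily odd: one of $d,d+1,d+2$ is always even, so an even $\ell$ would violate $\gcd(\ell,d)=\gcd(\ell,d+1)=\gcd(\ell,d+2)=1$. From these coprimality conditions $\bs,\bt,\mathbf{u}$ are affine permutations of $\mathbb{Z}_\ell$, hence each $a\in\mathbb{Z}_\ell$ occurs once in each of the three blocks and three times in $\mathbf{v}$, which gives uniformity at once. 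For the minimum gap I would simply list the absolute differences of consecutive entries: inside $\bs$, $\bt$, $\mathbf{u}$ they lie in $\{d,\ell-d\}$, $\{d+1,\ell-d-1\}$, $\{d+2,\ell-d-2\}$ respectively, and at the three junctions they equal $\ell-d-1$, $\ell-d-2$, $\ell-d$ (using $s_{\ell-1}=t_{\ell-1}=u_{\ell-1}=\ell-d$, $t_0=1$, $u_0=2$, $s_0=0$). Since $1<d<\frac{\ell-1}{2}$ forces $\ell\ge 2d+2$, and $\ell=2d+2$ is excluded by $\gcd(\ell,d+1)=1$ so that $\ell\ge 2d+3$, the smallest of all these numbers is $d$; hence $g=d-1$.

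The core of the matter is showing $H_{\mathbf{v}}(\tau)\le 3$ for every $\tau$ with $0<\tau<3\ell$. I would work through the three occurrences of a symbol $a$, writing them as $p_0(a)=A_0(a)\in[0,\ell)$, $p_1(a)=\ell+A_1(a)$, $p_2(a)=2\ell+A_2(a)$ with $A_0(a)=(ad^{-1})_\ell$, $A_1(a)=((a-1)(d+1)^{-1})_\ell$, $A_2(a)=((a-2)(d+2)^{-1})_\ell$, and setting $g_1(a)=p_1-p_0$, $g_2(a)=p_2-p_1$, $g_3(a)=p_2-p_0$. Then $H_{\mathbf{v}}(\tau)$ equals the number of pairs $(a,j)$, $j\in\{1,2,3\}$, with $g_j(a)\in\{\tau,\,3\ell-\tau\}$. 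The key elementary observation is that each $g_j(a)\bmod\ell$ is an affine bijection of $\mathbb{Z}_\ell$ in the variable $a$: indeed $g_1\equiv A_1-A_0$, $g_2\equiv A_2-A_1$, $g_3\equiv A_2-A_0\pmod\ell$, with respective slopes $-(d(d+1))^{-1}$, $-((d+1)(d+2))^{-1}$, $-2(d(d+2))^{-1}$, each a unit by the three coprimality hypotheses together with $\ell$ odd. Hence each of the six conditions ``$g_j(a)=\tau$'' and ``$g_j(a)=3\ell-\tau$'' is met by at most one $a$, so $H_{\mathbf{v}}(\tau)\le 6$ in all cases. Because $g_1,g_2\in[1,2\ell-1]$ while $g_3\in[\ell+1,3\ell-1]$, when $\tau\in(0,\ell)$ only $g_1=\tau$, $g_2=\tau$, $g_3=3\ell-\tau$ are feasible and when $\tau\in(2\ell,3\ell)$ only $g_3=\tau$, $g_1=3\ell-\tau$, $g_2=3\ell-\tau$ are feasible (or one may invoke $H_{\mathbf{v}}(\tau)=H_{\mathbf{v}}(3\ell-\tau)$), so $H_{\mathbf{v}}(\tau)\le 3$ there; and for $\tau\in\{\ell,2\ell\}$ the three feasible conditions are each realized by exactly one symbol, so $H_{\mathbf{v}}(\tau)=3$.

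The delicate case, which I expect to be the real obstacle, is $\tau\in(\ell,2\ell)$, where all six conditions are feasible and the argument above yields only $H_{\mathbf{v}}(\tau)\le 6$. Here I would prove the sharp statement that for each $j\in\{1,2,3\}$ \emph{exactly one} of ``$g_j(a)=\tau$ is solvable'' and ``$g_j(a)=3\ell-\tau$ is solvable'' holds, which forces $H_{\mathbf{v}}(\tau)=3$. Writing $\rho=\tau-\ell\in(0,\ell)$ and unwinding each of $g_j(a)=\tau$ and $g_j(a)=3\ell-\tau$ into a condition on $A_0,A_1,A_2$, each becomes an inequality of the form $B_j(a)<\rho$, respectively $B_j(a)<\ell-\rho$, where $B_j\in\{A_0,A_1,A_2\}$ and $a$ is the unique candidate $a^{\ast}$, resp.\ $a^{\ast\ast}$, picked out by the pertinent residue class of $g_j$ modulo $\ell$. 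The two candidates are distinct because $\rho\not\equiv-\rho\pmod\ell$ ($\ell$ odd), and a short computation gives the identity $B_j(a^{\ast})+B_j(a^{\ast\ast})\equiv-2\pmod\ell$. If both conditions held, the two inequalities would give $B_j(a^{\ast})+B_j(a^{\ast\ast})\le(\rho-1)+(\ell-\rho-1)=\ell-2$, hence equality (as the sum is $\equiv-2$), so both values sit at their upper bounds; substituting back into the explicit congruence for $B_j(a^{\ast})$ then reduces to $m_j\rho\equiv0\pmod\ell$ for some unit $m_j\in\{d,d+1,d+2\}$, i.e.\ $\rho\equiv0$, impossible. If neither held, the reversed inequalities would give $B_j(a^{\ast})+B_j(a^{\ast\ast})\ge\rho+(\ell-\rho)=\ell$, hence $=2\ell-2$, so $B_j(a^{\ast})=B_j(a^{\ast\ast})=\ell-1$, contradicting injectivity of $B_j$ together with $a^{\ast}\ne a^{\ast\ast}$. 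Therefore exactly one holds for each $j$, and $H_{\mathbf{v}}(\tau)=3$.

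Assembling the cases gives $H_{\mathbf{v}}=\max_{0<\tau<3\ell}H_{\mathbf{v}}(\tau)=3$, which meets the Lempel--Greenberger bound; together with the minimum-gap computation this establishes that $\mathbf{v}$ is an optimal $(3\ell,\ell,3)$-FHS with minimum gap $d-1$. Everything outside the $\tau\in(\ell,2\ell)$ dichotomy is routine bookkeeping, so in a full write-up the bulk of the work — and the only place where the full strength of $\gcd(\ell,d)=\gcd(\ell,d+1)=\gcd(\ell,d+2)=1$ and of $\ell$ being odd is used in an essential, non-obvious way — is that single step.
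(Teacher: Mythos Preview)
Your argument is correct. In the paper this proposition is quoted from \cite{PC22} without an independent proof; the closest proof in the paper is that of Theorem~\ref{cons I}, which handles the unshifted sequences $s^{d_j}_i=(id_j)_\ell$ via a block decomposition: it expands $H_{\mathbf v}(\tau)$ as a sum over the three component sequences, invokes the one--coincidence cross-correlation from Proposition~\ref{Prop_Cao_3}, and for $\ell<\tau<2\ell$ shows each pair of partial cross-sums cannot both equal~$1$ by the identity $h_1+h_2\equiv \ell-\tau'\pmod{\ell}$ together with the range constraints on $h_1,h_2$.

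Your route is organized differently: instead of decomposing by index blocks you decompose by symbols, tracking for each $a\in\mathbb Z_\ell$ the three positions $p_0(a),p_1(a),p_2(a)$ and the gaps $g_j(a)$. The two approaches meet at the same modular-sum trick --- your identity $B_j(a^{\ast})+B_j(a^{\ast\ast})\equiv -2\pmod{\ell}$ plays exactly the role of the paper's $h_1+h_2\equiv \ell-\tau'$ --- but your packaging has two mild advantages: it is self-contained (no appeal to the Cao lemma), and by also ruling out the ``neither holds'' alternative you obtain the sharper conclusion $H_{\mathbf v}(\tau)=3$ for every $\tau\in(\ell,2\ell)$, whereas the paper only needs and proves $H_{\mathbf v}(\tau)\le 3$. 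The paper's block approach, on the other hand, generalizes more transparently to arbitrary $d_1,d_2,d_3$ from a difference unit set, which is the point of Theorem~\ref{cons I}.
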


Given a positive integer $l$, the \textit{difference unit set} of $\mathbb{Z}_l$, denoted as $DU(\mathbb{Z}_l)$,
is a subset $D$ of $\mathbb{Z}_l^*=\{1\leq x\leq l-1\,:\, \gcd(x,l) = 1 \}$ such that for any integers $d_1,\, d_2\in D$ with $d_1<d_2$,
$d_2-d_1\in \mathbb{Z}_l^*$ and $D$ is the maximal set satisfying the aforementioned property.
Suppose that the factorization of $l$ is given by $l = p_1^{e_1}p_2^{e_2}\dots p_r^{e_r}$, where $r\geq 1$, $e_i\geq1$ for $1\leq i \leq r$, and $p_i$s are distinct primes such that $2\leq p_1<p_2<\cdots<p_r$. Then $DU(\mathbb{Z}_l)$ has size $p_1-1$, and one
example is $DU(\mathbb{Z}_l) = \{1,2,\dots, p_1-1\}$ \cite{JK09}.
It is readily seen that in Propositions \ref{Prop_PC22_1} and \ref{Prop_PC22_2},
the integers $d, d+1$ and $d+2$ belong to the different unit set of $\mathbb{Z}_l$. More importantly, it was shown in \cite{Cao2006Nov}
that FHSs from $DU(\mathbb{Z}_l)$ satisfy the following property.

\begin{prop} (\cite{Cao2006Nov})\label{Prop_Cao_3}
	Let $DU(\mathbb{Z}_\ell)$ be a difference unit set of $\mathbb{Z}_\ell$. For $d \in DU(\mathbb{Z}_\ell)$,
	define a sequence $\bs^{d} = (s^d_i)_{i=0}^{\ell-1}$ with $s^d_i = (i\cdot d)_\ell$.
	Then	
	$$
	H_{\bs^{d_1},\, \bs^{d_2}}(\tau)=
	\begin{cases}
	  \ell, & \text { if } d_1=d_2 \text { and } \tau=0, \\
	  0, & \text { if } d_1=d_2 \text { and } \tau \neq 0, \\
	 1, & \text { if } d_1 \neq d_2.
	 \end{cases}
	$$
\end{prop}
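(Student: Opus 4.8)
The plan is to compute $H_{\bs^{d_1},\bs^{d_2}}(\tau)$ directly by counting coincidences $s^{d_1}_i = s^{d_2}_{i+\tau}$, i.e., the number of indices $i \in \{0,1,\dots,\ell-1\}$ satisfying
$$
i\cdot d_1 \equiv (i+\tau)\cdot d_2 \pmod{\ell}.
$$
Rearranging, this is $i(d_1-d_2) \equiv \tau d_2 \pmod{\ell}$. The case $d_1=d_2$ is immediate: the congruence becomes $0 \equiv \tau d_2 \pmod\ell$; since $\gcd(d_2,\ell)=1$ (as $d_2\in DU(\mathbb{Z}_\ell)\subseteq\mathbb{Z}_\ell^*$), this holds for all $i$ when $\tau=0$ (giving $H=\ell$) and for no $i$ when $\tau\neq 0$ (giving $H=0$). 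This recovers the first two lines and is just the statement that $\bs^d$ is a permutation sequence.

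For the main case $d_1\neq d_2$, I would use the defining property of the difference unit set: $d_2-d_1\in\mathbb{Z}_\ell^*$, so $\gcd(d_2-d_1,\ell)=1$ and $d_2-d_1$ is invertible modulo $\ell$. Hence the congruence $i(d_1-d_2)\equiv \tau d_2\pmod\ell$ has a \emph{unique} solution $i\equiv (d_1-d_2)^{-1}\tau d_2 \pmod \ell$ in $\{0,1,\dots,\ell-1\}$, for every value of $\tau$. Therefore exactly one index contributes, and $H_{\bs^{d_1},\bs^{d_2}}(\tau)=1$ for all $\tau$, which is the third line.

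There is no real obstacle here; the only thing to be careful about is invoking the correct hypothesis at the correct moment — specifically that membership in $DU(\mathbb{Z}_\ell)$ guarantees not only $d_1,d_2\in\mathbb{Z}_\ell^*$ but also $d_2-d_1\in\mathbb{Z}_\ell^*$, which is precisely what makes the linear congruence in $i$ uniquely solvable regardless of $\tau$. One should also note that the indices of $\bt$ (here $\bs^{d_2}$) are read modulo $\ell$, consistent with the periodic correlation, so no boundary correction is needed. I would present the argument as the two displayed congruence manipulations above followed by the case split, keeping it to a few lines.
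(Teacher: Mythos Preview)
Your argument is correct. Note, however, that the paper does not supply its own proof of this proposition: it is quoted as a result from \cite{Cao2006Nov} and used without proof. Your direct congruence count, reducing the coincidence condition to $i(d_1-d_2)\equiv \tau d_2 \pmod{\ell}$ and invoking the invertibility of $d_1-d_2$ guaranteed by the difference unit set, is exactly the standard proof and matches the style of computation the paper itself deploys later (e.g., in the proof of Theorem~\ref{cons I}, where the same inverse $(d_1-d_2)^{-1}$ appears, and in Lemma~\ref{H}(iii)).
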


Based on the above property, a general construction of optimal WGFHSs is provided below.

\begin{theorem}\label{cons I}
	Let $DU(\mathbb{Z}_\ell)$ be a difference unit set of $\mathbb{Z}_\ell$. Let $d_1$, $d_2$ and $d_3$ be three different elements in $DU(\mathbb{Z}_\ell)$,
	and $\bs^{d}$ be a sequence of length $\ell$ given by  $s^{d}_i = (i \cdot d)_\ell$ for $d\in\{d_1,d_2,d_3\}$.
	Then,
	\begin{itemize}
		\item the sequence $\mathbf{v} = \bs^{d_1} \Vert \bs^{d_2}$ is an optimal $(2\ell, \ell, 2)$-FHS with minimum gap $g_1$;
		\item the sequence $\mathbf{v} = \bs^{d_1} \Vert \bs^{d_2}\Vert \bs^{d_3}$ is an optimal $(3\ell, \ell, 3)$-FHS with minimum gap $g_2$,
	\end{itemize}
where $g_1 = \min\limits_{1\leq j \leq 2} \{d_j-1, \ell-d_j-1\}$ and $g_2 = \min\limits_{1\leq j \leq 3} \{d_j-1, \ell-d_j-1\}$.	
\end{theorem}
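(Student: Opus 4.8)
The approach is to view $\mathbf v$ as a concatenation $\bs^{(0)}\Vert\cdots\Vert\bs^{(r-1)}$ of $r\in\{2,3\}$ length-$\ell$ blocks, each $\bs^{(a)}$ being one of the sequences $\bs^{d_j}$, and to establish in turn the optimality lower bound, the minimum gap, and the Hamming autocorrelation value. The lower bound is immediate: $\mathbf v$ has length $n=r\ell>\ell$ over an alphabet of size $\ell$, so the Lempel--Greenberger bound (Lemma~\ref{LG}) forces $H_{\mathbf v}\ge r$, and since $\mathbf v$ is uniform (each symbol occurring exactly $r$ times) with $n=r\ell$ the refined bound of Lemma~\ref{WGLG} gives the same value $r$. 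It then remains to show $H_{\mathbf v}(\tau)\le r$ for all $0<\tau<r\ell$ with equality attained, and to compute the minimum gap.

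For the minimum gap I would read off the adjacent differences directly. Inside a block $\bs^{d}$ (with $0<d<\ell$) we have $s^{d}_{i+1}-s^{d}_i=((i+1)d)_\ell-(id)_\ell\in\{d,\,d-\ell\}$, while at each junction between consecutive blocks, and at the cyclic wrap, the difference equals $s^{d}_{\ell-1}-0=((\ell-1)d)_\ell=\ell-d$ for the relevant $d=d_j$. Hence every adjacent difference of $\mathbf v$ has absolute value in $\{d_j,\,\ell-d_j:1\le j\le r\}$; moreover $d_j$ is attained (the step $0\to1$ inside its block) and $\ell-d_j$ is attained (the step leaving its block), so the minimum gap equals $\min_j\min\{d_j-1,\,\ell-d_j-1\}$, which is $g_1$ for $r=2$ and $g_2$ for $r=3$.

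The substance is the bound $H_{\mathbf v}(\tau)\le r$. Writing $\tau=q\ell+s$ with $0\le q<r$ and $0\le s<\ell$, I would split $H_{\mathbf v}(\tau)=\sum_x h[v_x,v_{x+\tau}]$ according to which blocks house $x$ and $x+\tau$. For $s=0$ this gives $H_{\mathbf v}(q\ell)=\sum_a H_{\bs^{(a)},\bs^{(a+q)}}(0)$ with superscripts read mod $r$, which equals $r$ when $q\ne0$ because distinct blocks have cross-correlation $1$ by Proposition~\ref{Prop_Cao_3}; in particular $\tau=\ell$ gives $H_{\mathbf v}(\ell)=r$, so the lower bound is met. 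For $s\ne0$ the split gives $H_{\mathbf v}(\tau)=\sum_a\bigl(\widehat H_{\bs^{(a)},\bs^{(a+q)}}(s)+\widetilde H_{\bs^{(a)},\bs^{(a+q+1)}}(s)\bigr)$, where $\widehat H$ and $\widetilde H$ denote the partial cross-correlations over the index windows $0\le i<\ell-s$ and $\ell-s\le i<\ell$. Because each $s^{d}_i=(id)_\ell$ is linear and $\gcd(\ell,d_a)=\gcd(\ell,d_a-d_b)=1$ for $d_a,d_b\in DU(\mathbb Z_\ell)$, a head or tail window between \emph{distinct} blocks counts the solutions, inside its window, of a single linear congruence modulo $\ell$, hence contributes $0$ or $1$, while a window between a block and itself contributes $0$ when $s\ne0$. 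The extra ingredient, proved by the same congruence analysis, is that for $a\ne b$ the head window of the transition $\bs^{(a)}\!\to\!\bs^{(b)}$ and the tail window of the reverse transition $\bs^{(b)}\!\to\!\bs^{(a)}$ have their unique congruence roots summing to $\ell-s$ modulo $\ell$, so exactly one of them lies in its window and the pair contributes $1$ in total. For $r=2$, and for $r=3$ with $q\in\{0,2\}$, one of the block-shifts $a\mapsto a+q$ and $a\mapsto a+q+1$ is the identity mod $r$, so all but $r$ windows vanish and $H_{\mathbf v}(\tau)\le r$ follows at once; the only case needing genuine care is $r=3$, $q=1$, where all six windows may be nonzero but partition into three head/reverse-tail pairs, each contributing $1$, whence $H_{\mathbf v}(\tau)=3$. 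Altogether $H_{\mathbf v}=r$, so together with the gap computation $\mathbf v$ is an optimal $(r\ell,\ell,r)$-FHS with minimum gap $g_1$ (for $r=2$) or $g_2$ (for $r=3$). I expect the $r=3$, $q=1$ subcase to be the real obstacle: a crude per-window bound there gives $2r$, and one genuinely needs the head/reverse-tail cancellation to sharpen it to $r$.
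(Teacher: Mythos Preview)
Your proposal is correct and follows essentially the same route as the paper. The paper also decomposes $H_{\mathbf v}(\tau)$ into block-to-block partial correlations, handles $\tau=\ell$ directly via Proposition~\ref{Prop_Cao_3}, uses vanishing autocorrelation for the easy ranges, and—in the sole delicate case $\ell<\tau<2\ell$ (your $r=3$, $q=1$)—pairs the head partial of $\bs^{d_a}\!\to\!\bs^{d_b}$ with the tail partial of $\bs^{d_b}\!\to\!\bs^{d_a}$ and shows the two congruence roots sum to $\ell-\tau'\pmod{\ell}$, forcing at most one hit per pair; your head/reverse-tail cancellation is exactly this argument. The only cosmetic differences are that you organize the case split uniformly via $\tau=q\ell+s$ rather than the paper's explicit ranges, and that you observe the pair contributes \emph{exactly} one (so $H_{\mathbf v}(\tau)=3$ on that range), whereas the paper proves only $\le 1$ per pair and invokes $H_{\mathbf v}(\ell)=3$ for tightness.
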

\begin{proof}

The two cases will be proved in a similar manner. We therefore omit the proof for the first case and focus on the second one, which is more complicated.

For $\mathbf{v} = \bs^{d_1} \Vert \bs^{d_2}\Vert \bs^{d_3}$, since $H_{\mathbf{v}}(\tau) = H_{\mathbf{v}}(3l-\tau)$ for $0< \tau < 3l$,
 it suffices to calculate $H_{\mathbf{v}}(\tau)$ for $0< \tau \leq \frac{3l}{2}$.
For convenience, we consider $H_{\mathbf{v}}(\tau)$ for $0< \tau < 2l$.

When $0 < \tau < l$, from Proposition \ref{Prop_Cao_3}, we have
\begin{align*}
H_{\mathbf{v}}(\tau)
&=\sum\limits_{i=0}^{l-1-\tau}h\left[s^{d_1}_i,s^{d_1}_{i+\tau}\right]+\sum\limits_{i=l-\tau}^{l-1}h\left[s^{d_1}_i,s^{d_2}_{i+\tau-l}\right]+
\sum\limits_{i=0}^{l-1-\tau}h\left[s^{d_2}_i,s^{d_2}_{i+\tau}\right]+\sum\limits_{i=l-\tau}^{l-1}h\left[s^{d_2}_i,s^{d_3}_{i+\tau-l}\right]\\
&\,\,\,\,\,\,\,\,+\sum\limits_{i=0}^{l-1-\tau}h\left[s^{d_3}_i,s^{d_3}_{i+\tau}\right]
+\sum\limits_{i=l-\tau}^{l-1}h\left[s^{d_3}_i,s^{d_1}_{i+\tau-l}\right]\\
&=\sum\limits_{i=l-\tau}^{l-1}h\left[s^{d_1}_i,s^{d_2}_{i+\tau-l}\right]+\sum\limits_{i=l-\tau}^{l-1}h\left[s^{d_2}_i,s^{d_3}_{i+\tau-l}\right]
+\sum\limits_{i=l-\tau}^{l-1}h\left[s^{d_3}_i,s^{d_1}_{i+\tau-l}\right]\\
&\leq H_{\bs^{d_1},\, \bs^{d_2}} + H_{\bs^{d_2},\, \bs^{d_3}} + H_{\bs^{d_3},\, \bs^{d_1}}\\
&= 3.
\end{align*}

When $\tau = l$, from Proposition \ref{Prop_Cao_3}, we have
\begin{align*}
H_{\mathbf{v}}(\tau)
&=\sum\limits_{i=0}^{l-1}h\left[s^{d_1}_i,s^{d_2}_i\right]+\sum\limits_{i=0}^{l-1}h\left[s^{d_2}_i,s^{d_3}_i\right]
+\sum\limits_{i=0}^{l-1}h\left[s^{d_3}_i,s^{d_1}_i\right]\\
&=H_{\bs^{d_1},\, \bs^{d_2}} + H_{\bs^{d_2},\, \bs^{d_3}} + H_{\bs^{d_3},\, \bs^{d_1}}\\
&= 3.
\end{align*}

When $l < \tau < 2l$, let $\tau' = \tau - l$, then $0 < \tau' < l$ and
\begin{align*}
H_{\mathbf{v}}(\tau)
&=\sum\limits_{i=0}^{l-1-\tau'}h\left[s^{d_1}_i,s^{d_2}_{i+\tau'}\right]+\sum\limits_{i=l-\tau'}^{l-1}h\left[s^{d_1}_i,s^{d_3}_{i+\tau'-l}\right]+
\sum\limits_{i=0}^{l-1-\tau'}h\left[s^{d_2}_i,s^{d_3}_{i+\tau'}\right]\\
&\,\,\,\,\,\,\,\,+\sum\limits_{i=l-\tau'}^{l-1}h\left[s^{d_2}_i,s^{d_1}_{i+\tau'-l}\right]+\sum\limits_{i=0}^{l-1-\tau'}h\left[s^{d_3}_i,s^{d_1}_{i+\tau'}\right]+\sum\limits_{i=l-\tau'}^{l-1}h\left[s^{d_3}_i,s^{d_2}_{i+\tau'-l}\right].
\end{align*}
To prove $H_{\mathbf{v}}(\tau) \leq 3$, we will prove that
\begin{align}\label{d12}
\sum\limits_{i=0}^{l-1-\tau'}h\left[s^{d_1}_i,s^{d_2}_{i+\tau'}\right]+
\sum\limits_{i=l-\tau'}^{l-1}h\left[s^{d_2}_i,s^{d_1}_{i+\tau'-l}\right]
\leq 1,
\end{align}
\begin{align}\label{d13}
\sum\limits_{i=l-\tau'}^{l-1}h\left[s^{d_1}_i,s^{d_3}_{i+\tau'-l}\right]+
\sum\limits_{i=0}^{l-1-\tau'}h\left[s^{d_3}_i,s^{d_1}_{i+\tau'}\right]
\leq 1
\end{align}
and
\begin{align}\label{d23}
\sum\limits_{i=0}^{l-1-\tau'}h\left[s^{d_2}_i,s^{d_3}_{i+\tau'}\right]
+\sum\limits_{i=l-\tau'}^{l-1}h\left[s^{d_3}_i,s^{d_2}_{i+\tau'-l}\right]
\leq 1.
\end{align}
It follows from Proposition \ref{Prop_Cao_3} that $H_{\bs^{d_1},\, \bs^{d_2}} = 1$ for $d_1 \neq d_2$.
Assume (\ref{d12}) does not hold. Then we have
$$\sum\limits_{i=0}^{l-1-\tau'}h\left[s^{d_1}_i,s^{d_2}_{i+\tau'}\right]=\sum\limits_{i=l-\tau'}^{l-1}h\left[s^{d_2}_i,s^{d_1}_{i+\tau'-l}\right]=1,$$
i.e.,
\begin{align*}
\sum\limits_{i=0}^{l-1-\tau'}h\left[s^{d_1}_i,s^{d_2}_{i+\tau'}\right]
=\sum\limits_{i=0}^{l-1-\tau'}h\left[id_1,(i+\tau')d_2\right]
=\sum\limits_{i=0}^{l-1-\tau'}h\left[i,(d_1-d_2)^{-1}\tau' d_2\right]
=1
\end{align*}
and
\begin{align*}
\sum\limits_{i=l-\tau'}^{l-1}h\left[s^{d_2}_i,s^{d_1}_{i+\tau'-l}\right]
=\sum\limits_{i=l-\tau'}^{l-1}h\left[id_2,(i+\tau'-l)d_1\right]
=\sum\limits_{i=l-\tau'}^{l-1}h\left[i,(d_2-d_1)^{-1}(\tau'-l)d_1\right]
=1.
\end{align*}
Denote $h_1=(d_1-d_2)^{-1}\tau' d_2$ and $h_2=(d_2-d_1)^{-1}(\tau'-l)d_1$. Then
$h_1 \in [0, l-1-\tau']$ and $h_2 \in [l-\tau', l-1]$, which implies $h_1 + h_2 \in [l-\tau', 2l-2-\tau']$.
 Furthermore,
\begin{align}\label{h_1+h_2}
h_1+h_2
&=(d_1-d_2)^{-1}\tau' d_2+(d_2-d_1)^{-1}(\tau'-l)d_1 \nonumber\\
&=\tau'(d_1-d_2)^{-1}(d_2-d_1)-(d_2-d_1)^{-1}l d_1 \nonumber\\
& \equiv l-\tau' \mod l,
\end{align}
which implies $h_1 = 0$ and $h_2 = l-\tau'$, thereby $l \mid (d_1-d_2)^{-1}\tau' d_2$.
Since $d_1, d_2 \in DU(\mathbb{Z}_l)$, we get $l \mid \tau'$ . This result contradicts $0 < \tau' < l$.

Thus the inequality (\ref{d12}) holds. The inequalities (\ref{d13}) and (\ref{d23}) can be similarly proved. Consequently, we have $H_{\mathbf{v}}(\tau) \leq 3$.

For $\mathbf{v} = (v_i)^{3l-1}_{i=0} = \bs^{d_1} \Vert \bs^{d_2} \Vert \bs^{d_3}$, it is easily seen that $|v_{i+1}-v_i| \in \{d_1, d_2, d_3, (l-1)d_1, (l-1)d_2, (l-1)d_3\}$, where the subscripts of $v$ are taken modulo $3l$.
Then the minimum gap of $\mathbf{v}$ is $g_2 = \min\limits_{1\leq j\leq 3} \{d_j, (l-1)d_j\} -1 = \min\limits_{1\leq j\leq 3} \{d_j-1, l-d_j-1\}$, where $(l-1)d_j \equiv l-d_j \mod l$.
\end{proof}

The following example demonstrates Theorem \ref{cons I}.

\begin{example}
Let $\ell=25$.
According to the definition of the difference unit set, $DU(\mathbb{Z}_{25})$ is not unique. Since their applications in Theorem \ref{cons I} are similar, we only consider the case $DU(\mathbb{Z}_{25}) = \{3, 6, 7, 9\}$.

When $d_1 = 7$ and $d_2 = 9$, $\mathbf{v} = \bs^{d_1} \Vert \bs^{d_2} = (0, 7, 14, 21, 3, 10, 17, 24, 6, 13, 20, 2, 9, 16, 23, 5,\\
 12, 19, 1, 8, 15, 22, 4, 11, 18, 0, 9, 18, 2, 11, 20, 4, 13, 22, 6, 15, 24, 8, 17, 1, 10, 19, 3, 12, 21, 5, 14,\\
  23, 7, 16)$.
Then $\mathbf{v}$ is a sequence of length 50 over $\mathbb{Z}_{25}$.
By calculation, we get $H_{\mathbf{v}} = 2$, and the minimum gap is $d_1-1 = 6$.

When $d_1 = 6$, $d_2 = 7$ and $d_3 = 9$, we have $\mathbf{v} = \bs^{d_1} \Vert \bs^{d_2}\Vert \bs^{d_3} = (0, 6, 12, 18, 24, 5, 11, 17, 23,\\
 4, 10, 16, 22, 3, 9, 15, 21, 2, 8, 14, 20, 1, 7, 13, 19, 0, 7, 14, 21, 3, 10, 17, 24, 6, 13, 20, 2, 9, 16, 23, 5,\\
 12, 19, 1, 8, 15, 22, 4,  11, 18, 0, 9, 18, 2, 11, 20, 4, 13, 22, 6, 15, 24, 8, 17, 1, 10, 19, 3, 12, 21, 5, 14,\\
 23, 7, 16)$
and the sequence $\mathbf{v}$ is an optimal $(75, 25, 3)$-FHS with minimum gap $5$.
\end{example}

\begin{remark}
{\rm(i)} The first case of Theorem \ref{cons I} can be further extended.
Let $s^{d_1}_i = (i\cdot d_1+i_1)_\ell$ and $s^{d_2}_i = (i\cdot d_2+i_2)_\ell$, where $i_1$ and $i_2$ are two integers with $0 \leq i_1, i_2 < \ell$, the Hamming correlation of $\mathbf{v} = \bs^{d_1} \Vert \bs^{d_2}$ is still optimal.  When $1<d_1<\frac{\ell}{2}$, $d_2=d_1+1$ and $(i_1,i_2)=(0,1)$, this case corresponds to \cite[Thm. 1]{PC22}.
The parameters in Theorem \ref{cons I} of this paper are more flexible.
For example, $\ell=25$, $d_1=6$ and $d_2=7$, $8$ or $9$.
When $d_1 = \lfloor \frac{\ell-1}{2}\rfloor$ and $d_2 = d_1+1$, the minimum gap can achieve the upper bound stated in Proposition \ref{Prop1}. Otherwise, the minimum gap is always less than the upper bound.

{\rm(ii)} For the second case of Theorem \ref{cons I}, when choosing $(d_1,d_2,d_3) = (d, d+1, d+2)$ with $1<d<\frac{\ell-1}{2}$, one can extend it by shifting each subsequence by $i_1, i_2, i_3$ positions as discussed in \cite[Rem. 2]{PC22}.
For any $d_1, d_2, d_3 \in DU(\mathbb{Z}_\ell)$ and $0 \leq i_1, i_2, i_3 < \ell$,
this kind of extension may not work. The main reason is that now
$h_1=(d_1-d_2)^{-1}\tau' d_2+(d_2-d_1)^{-1}(i_1-i_2)$ and $h_2=(d_2-d_1)^{-1}(\tau'-l)d_1+(d_2-d_1)^{-1}(i_1-i_2)$.
In this case, $h_1 \in [0, \ell-1-\tau']$ and $h_2 \in [\ell-\tau', \ell-1]$ are possible,
indicating that (\ref{d12}) may not hold.
Two of (\ref{d12}), (\ref{d13}) and (\ref{d23}) do not hold, the sequence $\mathbf{v}$ is not optimal.
For example, let $\ell=13$ and $(d_1,d_2,d_3) = (4, 5, 7)$, when $i_1=1$, $i_2=3$ and $i_3=4$, the sequence is not optimal.
For the sequences of length $3\ell$, in reference \cite{PC22} and in Theorem \ref{cons I}, the minimum gaps are always less than the upper bound stated in Proposition \ref{Prop1}.

In addition, one cannot easily choose $k\geq 4$ integers $d_1,\dots, d_k $ in $DU(\mathbb{Z}_\ell)$ to obtain optimal $(k\ell, \ell, k)$-FHSs with controlled minimum gaps.

\end{remark}

\subsection{A recursive construction}\label{recursive}
The previous subsection discussed the case of $\gcd(l, d_i)=1$. Now we will further explore the case that $\gcd(l, d_i)=m$ with $m\geq 2$.

Let $l = l_1m$ for a positive integer $m\geq 2$. We arrange the sequence $(0,1,\dots,l-1)$  as an $m\times l_1$ matrix $A$ as follows:
\begin{align*}
A=[A_0, A_1,\dots, A_{l_1-1}] =
\begin{bmatrix}
 0&m&\cdots&(l_1-1)m\\
1&m+1&\cdots&(l_1-1)m+1\\
\vdots&\vdots&&\vdots\\
m-1&2m-1&\cdots&(l_1-1)m+m-1\\
\end{bmatrix},
\end{align*}
where $A_j$ is the $j$-th column of $A$.
For a permutation $\bm{\rho}$ of $\{0,1,\dots, l_1-1\}$, denote
$$
\bm{\rho}(A)=[A_{\rho(0)}, A_{\rho(1)}, \dots, A_{\rho(l_1-1)}].
$$
Below we will use two particular permutations in the main construction.

Let $d_1$ and $d_2$ be two positive integers with $2\leq d_1 < d_2 < l$ and
\begin{equation}\label{eq_cond_d1d1}
\gcd(l, d_1) = \gcd(l, d_2) = \gcd(l, d_2-d_1) = m.
\end{equation}
Denote $e_1=d_1/m$ and $e_2=d_2/m$. Then $\gcd(l_1,e_1)=\gcd(l_1,e_2)=\gcd(l_1,e_2-e_1)=1$.
Let $\bm{\rho}_1=[0, e_1, \dots, (l_1-1)e_1]$ and $\bm{\rho}_2=[0, e_2, \dots, (l_1-1)e_2]$ be two permutations on $\{0, 1, \dots, l_1-1\}$,
where the entries in $\bm{\rho}_1$ and $\bm{\rho}_2$ are taken modulo $l_1$.
Then we obtain the following two $m\times l_1$ matrices:
\begin{align*}
&S=\bm{\rho}_1(A)=
\begin{bmatrix}
 0&d_1&\cdots&(l_1-1)d_1\\
1&d_1+1&\cdots&(l_1-1)d_1+1\\
\vdots&\vdots&&\vdots\\
m-1&d_1+m-1&\cdots&(l_1-1)d_1+m-1\\
\end{bmatrix}
=
\begin{bmatrix}
\bs^0 \\ \bs^1 \\ \vdots \\ \bs^{m-1}
\end{bmatrix}
\end{align*}
and
\begin{align*}
&T=\bm{\rho}_2(A)=
 \begin{bmatrix}
 0&d_2&\cdots&(l_1-1)d_2\\
 1&d_2+1&\cdots&(l_1-1)d_2+1\\
 \vdots&\vdots&&\vdots\\
 m-1&d_2+m-1&\cdots&(l_1-1)d_2+m-1\\
 \end{bmatrix}
 =
\begin{bmatrix}
\bt^0 \\ \bt^1 \\ \vdots \\ \bt^{m-1}
\end{bmatrix},
\end{align*}
where entries in $S$ and $T$ are taken modulo $l$.
With the above preparation, we present the following construction.

\begin{cons} \label{cons_II}
	Let $\ell$, $d_1$ and $d_2$  be three positive integers with $\gcd(\ell, d_1) = \gcd(\ell, d_2) = \gcd(\ell, d_2-d_1) = m$.
	Let $\mathbf{s}^j$ and $\mathbf{t}^k$ be the row vectors of $S$ and $T$, respectively, given by
	\begin{equation}\label{st}
	\begin{aligned}
	&\mathbf{s}^j=(s^j_0, s^j_1, \dots, s^j_{\ell_1-1}), \,\,\,\,\,\,s^j_i=id_1+j,\\
	&\mathbf{t}^k=(t^k_0, t^k_1, \dots, t^k_{\ell_1-1}), \,\,\,\,\,\,\,t^k_i=id_2+k,
	\end{aligned}
	\end{equation}
	where $0\leq j,k< m$ and $0\leq i < \ell_1=\ell/m$.
	Denote
	$$\ba = \ba^{0}\Vert\ba^{1}\Vert\cdots \Vert\ba^{2m-1} = \bs^0\Vert\cdots \Vert\bs^{m-1}||\bt^{0}||\cdots||\bt^{m-1}.$$
	Let $\bm{\pi}$ be a permutation of $\{0, 1, \dots, 2m-1\}$.
	Define a sequence $\bu$ of length $n=2\ell$ over $\mathbb{Z}_\ell$ via permuting $\ba$ as follows:
	\begin{equation}\label{key}
	\mathbf{u} = \bm\pi(\ba)=\mathbf{a}^{\pi(0)}\Vert\mathbf{a}^{\pi(1)}\Vert\cdots\Vert\mathbf{a}^{\pi(2m-1)}.
	\end{equation}
\end{cons}

The following example demonstrates Construction \ref{cons_II}.

\begin{example}
	 Take $m=3$.
	For three positive integers $\ell$, $d_1$ and $d_2$ with $\gcd(\ell, d_1) = \gcd(\ell, d_2) = \gcd(\ell, d_2-d_1) = 3$, we obtain $3$ sequences $\bs^j$ of length $\ell/3$ from the matrix $S$ and $3$ sequences $\bt^k$
	from the matrix $T$. Suppose $\bm{\pi} = (0, 1, 3, 2, 4, 5)$.
	Then we have
	$$
	\ba = \bs^{0}\Vert \bs^1 \Vert \bs^2 \Vert \bt^{0}\Vert \bt^1 \Vert \bt^2 \quad  and  \quad \mathbf{u}=\mathbf{s}^{0}\Vert\mathbf{s}^{1}\Vert\mathbf{t}^{0}\Vert\mathbf{s}^{2}\Vert\mathbf{t}^{1}\Vert\mathbf{t}^{2}.
	$$
\end{example}
The Hamming correlation and the minimum gap of the sequence $\bu=\bm{\pi}(\ba)$ in Construction \ref{cons_II} are characterized below.
\begin{theorem}\label{H=H}
Let $\ell$, $d_1$ and $d_2$ be three positive integers with $\gcd(\ell, d_1) = \gcd(\ell, d_2) = \gcd(\ell, d_2-d_1) = m$.
Let $\bm{\pi}$ be a permutation of $\{0,1,2,\dots,2m-1\}$ and $\bm{\pi}_m=(\pi(i)\mod{m})_{0\leq i<2m}$.
Then the sequence $\bu=\bm{\pi}(\ba)$ defined by Construction \ref{cons_II} satisfies
$H_{\bu}=H_{\bm\pi_m}.$ Furthermore,
if $d_1+d_2<\ell-m+2$, then the sequence $\bu$ has minimum gap $d_1-1$.
\end{theorem}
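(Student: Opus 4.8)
I would reduce everything to the block-wise structure of $\bu$ modulo $m$. Write each index of $\bu$ as $b\ell_1+a$ with $0\le b<2m$ the block number and $0\le a<\ell_1$ the offset inside the block, so $u_{b\ell_1+a}=\ba^{\pi(b)}_a$; since every entry of block $\ba^{\pi(b)}$ is congruent to $\pi(b)$ modulo $m$, a coincidence $u_{b\ell_1+a}=u_{b'\ell_1+a'}$ with $b\ne b'$ forces $\pi(b)$ and $\pi(b')$ to differ by exactly $m$ — I will call $(\pi(b),\pi(b'))$ a \emph{partner pair} in that case. With $e_1=d_1/m$, $e_2=d_2/m$, the hypothesis $\gcd(\ell,d_1)=\gcd(\ell,d_2)=\gcd(\ell,d_2-d_1)=m$ gives $\gcd(\ell_1,e_1)=\gcd(\ell_1,e_2)=\gcd(\ell_1,e_2-e_1)=1$, which makes all the linear congruences below uniquely solvable modulo $\ell_1$. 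I also record that for $\sigma\in\{1,\dots,2m-1\}$ one has $H_{\bm{\pi}_m}(\sigma)=\#\{\,b:(\pi(b),\pi(b+\sigma))\text{ is a partner pair}\,\}$ (block indices mod $2m$). Fixing $\tau$ with $0<\tau<2\ell$, I write $\tau=\tau_1\ell_1+\tau_0$ with $0\le\tau_1<2m$, $0\le\tau_0<\ell_1$, and split the usual concatenation identity
\[ H_{\bu}(\tau)=\sum_{b=0}^{2m-1}\Bigl(\sum_{a=0}^{\ell_1-1-\tau_0}h[\ba^{\pi(b)}_a,\ba^{\pi(b+\tau_1)}_{a+\tau_0}]+\sum_{a=\ell_1-\tau_0}^{\ell_1-1}h[\ba^{\pi(b)}_a,\ba^{\pi(b+\tau_1+1)}_{a+\tau_0-\ell_1}]\Bigr), \]
so that block $b$ only meets blocks $b+\tau_1$ (the ``head'' sum) and $b+\tau_1+1$ (the ``tail'' sum), and the tail sum is empty when $\tau_0=0$.

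For the block-aligned delays $\tau=\tau_1\ell_1$ with $\tau_1\ne0$, the tail sum vanishes and a match in block $b$ requires $\ba^{\pi(b)}_a=\ba^{\pi(b+\tau_1)}_a$; if $(\pi(b),\pi(b+\tau_1))$ is a partner pair this reduces to $a(e_2-e_1)\equiv0\pmod{\ell_1}$, i.e.\ $a=0$, giving exactly one match, while non-partner pairs give none. Hence $H_{\bu}(\tau_1\ell_1)=H_{\bm{\pi}_m}(\tau_1)$.

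Now let $\tau_0\ne0$. Fix $b$ and suppose the partner of $\pi(b)$ sits at block $b+\tau_1$ (resp.\ $b+\tau_1+1$). Then the head (resp.\ tail) match-equation is a single congruence in $a$ modulo $\ell_1$ with leading coefficient $\pm(e_2-e_1)$, hence has a unique root $a_0\in\{0,\dots,\ell_1-1\}$, and $a_0\ne0$ because $a_0=0$ would force $\ell_1\mid\tau_0$. This root is $a_0^{(s)}:=(e_1-e_2)^{-1}\tau_0 e_2\bmod\ell_1$ when $\pi(b)<m$ and $a_0^{(t)}:=(e_2-e_1)^{-1}\tau_0 e_1\bmod\ell_1$ when $\pi(b)\ge m$; crucially $a_0^{(s)}+a_0^{(t)}\equiv\ell_1-\tau_0\pmod{\ell_1}$, so, both lying in $\{1,\dots,\ell_1-1\}$, either $a_0^{(s)}+a_0^{(t)}=\ell_1-\tau_0$, in which case both are $\le\ell_1-1-\tau_0$, or $a_0^{(s)}+a_0^{(t)}=2\ell_1-\tau_0$, in which case both are $\ge\ell_1-\tau_0$. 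Since a head (resp.\ tail) root yields a match exactly when it lies in $[0,\ell_1-1-\tau_0]$ (resp.\ $[\ell_1-\tau_0,\ell_1-1]$), in the first case \emph{every} partner pair at block-distance $\tau_1$ contributes one match and none at block-distance $\tau_1+1$, and in the second case the reverse. Therefore $H_{\bu}(\tau)$ equals the number of blocks whose partner sits at the relevant block-distance — a quantity of the form $H_{\bm{\pi}_m}(\sigma)$ with $\sigma\in\{1,\dots,2m-1\}$ (or $0$ if that distance is $0$) — and is never a sum of two such terms. Combining with the block-aligned case and taking the maximum over all $\tau\ne0$ gives $H_{\bu}=H_{\bm{\pi}_m}$. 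I expect this head/tail dichotomy to be the crux of the argument: without the identity $a_0^{(s)}+a_0^{(t)}\equiv\ell_1-\tau_0$ one would only get the far too weak bound $H_{\bu}(\tau)\le H_{\bm{\pi}_m}(\tau_1)+H_{\bm{\pi}_m}(\tau_1+1)$.

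For the minimum gap I would first record, from $d_1+d_2<\ell-m+2$ and $2\le d_1<d_2$, the inequalities $\ell-d_1-d_2\ge m-1$, $\ell-2d_1\ge m$ and $d_1<\ell/2$, whence $\min\{d_1,\ell-d_1\}=d_1$ and $\min\{d_2,\ell-d_2\}>d_1$. Inside an $\bs^j$-block consecutive entries (viewed in $\{0,\dots,\ell-1\}$) differ by $d_1$ or $\ell-d_1$, and $d_1$ actually occurs, between offsets $0$ and $1$; inside a $\bt^k$-block they differ by $d_2$ or $\ell-d_2$, both $>d_1$. At any block boundary of $\bu$ — including the wrap-around pair $u_{2\ell-1},u_0$, which the definition of minimum gap includes — the last entry of a block equals $\ell-d_1+j$ or $\ell-d_2+k$ with $j,k\in\{0,\dots,m-1\}$, while the first entry of the following block is some $v\in\{0,\dots,m-1\}$, so that boundary difference is $\ell-d_1+j-v\ge\ell-d_1-m+1\ge d_1$ or $\ell-d_2+k-v\ge\ell-d_2-m+1\ge d_1$. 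Thus every cyclic consecutive difference of $\bu$ is $\ge d_1$ with the value $d_1$ attained, so the minimum gap of $\bu$ is exactly $d_1-1$.
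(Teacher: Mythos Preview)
Your argument is correct and hinges on the same root-sum identity as the paper --- $a_0^{(s)}+a_0^{(t)}\equiv \ell_1-\tau_0\pmod{\ell_1}$, which the paper writes as $i_0+i_1\equiv -\tau_1$ --- but you organise it more tightly. The paper proves $H_{\bu}\le H_{\bm\pi_m}$ and $H_{\bm\pi_m}\le H_{\bu}$ by two separate arguments: for the upper bound it treats $r=1$ apart, then for $r\ne1$ passes through an informal ``worst-case distribution'' step before using the identity to show that an $\bs\!\to\!\bt$ head root and a $\bt\!\to\!\bs$ tail root cannot both lie in their required ranges; the lower bound is obtained afterwards by a distinct overlap-counting argument. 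Your dichotomy --- that $a_0^{(s)}$ and $a_0^{(t)}$ always land in the \emph{same} range (both head or both tail) --- handles all four head/tail type combinations simultaneously and yields the exact value $H_{\bu}(\tau)\in\{H_{\bm\pi_m}(\tau_1),H_{\bm\pi_m}(\tau_1{+}1),0\}$ for every $\tau$; since the block-aligned delays $\tau_1\ell_1$ already realise each $H_{\bm\pi_m}(\tau_1)$, taking maxima gives $H_{\bu}=H_{\bm\pi_m}$ with no separate lower-bound step. Your minimum-gap computation matches the paper's essentially line for line.
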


In order to prove Theorem \ref{H=H}, we first discuss the Hamming correlation of sequences $\mathbf{s}^j$ and $\mathbf{t}^k$ in the following lemma.

\begin{lem}\label{H} Let $\mathbf{s}^j$ and $\mathbf{t}^k$, where $0\leq j,k< m$, be defined as in \eqref{st}. Then \vspace{4pt}\\\vspace{3pt}
{\rm(i)} $H_{\mathbf{s}^j}(\tau)=H_{\mathbf{t}^j}(\tau)=0$, $0< \tau< \ell_1;$\\\vspace{3pt}
{\rm(ii)} $H_{\mathbf{s}^j,\mathbf{s}^k}(\tau)=H_{\mathbf{t}^j,\mathbf{t}^k}(\tau)=H_{\mathbf{s}^j,\mathbf{t}^k}(\tau)=0$, $0\leq \tau< \ell_1$ and $j\neq k;$\\\vspace{3pt}
{\rm(iii)} $H_{\mathbf{s}^j,\mathbf{t}^j}(\tau)= 1,$ $0\leq \tau< \ell_1.$
\end{lem}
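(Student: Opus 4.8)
The plan is to compute each Hamming correlation directly from its definition as a count of coincidences, reduce every coincidence condition to a linear congruence in the summation index $i$, and then count solutions using the three coprimality facts $\gcd(\ell_1,e_1)=\gcd(\ell_1,e_2)=\gcd(\ell_1,e_2-e_1)=1$, where $e_1=d_1/m$ and $e_2=d_2/m$.

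First I would record the simplification that removes the wrap-around. For $0\le i<\ell_1$ and $0\le\tau<\ell_1$ the shifted index $(i+\tau)\bmod\ell_1$ equals $i+\tau-c\ell_1$ with $c\in\{0,1\}$, and since $\ell_1 d_1=\ell e_1\equiv 0\pmod\ell$ and likewise $\ell_1 d_2\equiv 0\pmod\ell$, we get $\big((i+\tau)\bmod\ell_1\big)d_1\equiv(i+\tau)d_1\pmod\ell$, and the same with $d_2$. Hence in all the coincidence conditions below one may treat the shifted index simply as $i+\tau$ without reducing it modulo $\ell_1$.

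Next, for the decimation-by-$d_1$ correlations: $H_{\mathbf{s}^j,\mathbf{s}^k}(\tau)$ counts those $i\in\{0,\dots,\ell_1-1\}$ with $id_1+j\equiv(i+\tau)d_1+k\pmod\ell$, i.e. $j-k\equiv\tau d_1\pmod\ell$, a condition independent of $i$; so this correlation is either $0$ or $\ell_1$. Since $m\mid d_1$, the right-hand side is $\equiv 0\pmod m$, while $j-k\in(-m,m)$, so the congruence forces $j=k$, after which it reads $\ell\mid\tau d_1$, i.e. $\ell_1\mid\tau e_1$, i.e. $\ell_1\mid\tau$ because $\gcd(\ell_1,e_1)=1$; for $0<\tau<\ell_1$ this fails, giving $H_{\mathbf{s}^j}(\tau)=0$, and the same computation with $d_2,e_2$ gives $H_{\mathbf{t}^j}(\tau)=0$, proving (i). For $j\ne k$ and any $0\le\tau<\ell_1$ the congruence has no solution, so $H_{\mathbf{s}^j,\mathbf{s}^k}(\tau)=H_{\mathbf{t}^j,\mathbf{t}^k}(\tau)=0$.

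Finally, for the mixed correlation $H_{\mathbf{s}^j,\mathbf{t}^k}(\tau)$ the coincidence condition is $id_1+j\equiv(i+\tau)d_2+k\pmod\ell$, i.e. $i(d_1-d_2)\equiv\tau d_2+(k-j)\pmod\ell$. Because $\gcd(\ell,d_1-d_2)=m$ and $m\mid\tau d_2$, this is solvable in $i$ only when $m\mid(k-j)$, which again forces $j=k$; thus for $j\ne k$ the count is $0$, completing (ii). When $j=k$ the congruence becomes $i(d_1-d_2)\equiv\tau d_2\pmod\ell$; dividing through by $m$ yields $i(e_1-e_2)\equiv\tau e_2\pmod{\ell_1}$, which by $\gcd(\ell_1,e_1-e_2)=1$ has a unique solution $i$ modulo $\ell_1$, hence exactly one $i$ in $\{0,\dots,\ell_1-1\}$, so $H_{\mathbf{s}^j,\mathbf{t}^j}(\tau)=1$, proving (iii). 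I do not expect a serious obstacle here; the only point demanding care is the bookkeeping of index reduction modulo $\ell_1$ versus modulo $\ell$ together with the consistent use of all three gcd hypotheses — in particular, $\gcd(\ell,d_2-d_1)=m$ (rather than $1$) is exactly what makes the mixed congruence solvable for every $\tau$ when $j=k$, and the associated coprimality $\gcd(\ell_1,e_2-e_1)=1$ is what pins that solution down to a single value of $i$.
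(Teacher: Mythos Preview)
Your proof is correct and follows essentially the same congruence-based approach as the paper; in particular, the reduction of (iii) to the single congruence $i(e_1-e_2)\equiv\tau e_2\pmod{\ell_1}$ and the appeal to $\gcd(\ell_1,e_2-e_1)=1$ are identical. The only minor differences are presentational: the paper dispatches (i) and the $\mathbf{s}^j,\mathbf{s}^k$, $\mathbf{t}^j,\mathbf{t}^k$ parts of (ii) by observing that distinct rows of $S$ (resp.\ $T$) lie in disjoint residue classes modulo $m$, and it silently absorbs the index wrap-around modulo $\ell_1$ that you take care to justify explicitly.
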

\begin{proof}
According to the construction of $S$ and $T$, each row in $S$ and $T$ contains distinct elements.
Thus (i) holds and
$H_{\mathbf{s}^j,\mathbf{s}^k}(\tau)=H_{\mathbf{t}^j,\mathbf{t}^k}(\tau)=0$ for $0\leq \tau< l_1$ and $j\neq k$.
In addition, we claim that
$
s^j_{i_1} \neq t^k_{i_2}
$  for any $0\leq i_1, i_2 <l_1$ and $j\neq k$. Otherwise, the equality $s^j_{i_1} = t^k_{i_2}$ implies
$$
i_1d_1 + j \equiv i_2d_2 + k \mod{l},
$$	
indicating that $i_1e_1-i_2e_2$ where $e_i=d_i/m$ for $i = 1, 2$, satisfies the following congruence equation
$$
mx \equiv k-j \mod{l}.
$$
The congruence equation has no solution since $j,k<m$. This is a contradiction.

(iii)
The Hamming cross-correlation of $\mathbf{s}^j$ and $\mathbf{t}^j$ is
$$
H_{\mathbf{s}^j,\mathbf{t}^j}(\tau)
=\sum\limits_{i=0}^{l_1-1}h[id_1+j, (i+\tau)d_2+j]
=\sum\limits_{i=0}^{l_1-1}h[0, (i(e_2-e_1) + \tau e_2) m].
$$
Note that $h[0, (i(e_2-e_1) + \tau e_2) m]=1$ if and only if $l \mid (i(e_2-e_1) + \tau e_2) m$, i.e., $l_1 \mid (i(e_2-e_1) + \tau e_2)$.
Since $\gcd(l, d_2-d_1) = m$, we have $\gcd(l_1, e_1-e_2)=1$.
Hence, for any shift $0\leq \tau<l_1$, there exists exactly one $i \equiv -\tau e_2(e_2-e_1)^{-1}\mod{l_1}$ such that $l \mid (i(e_2-e_1) + \tau e_2)m$, which implies $H_{\mathbf{s}^j,\mathbf{t}^j}(\tau)= 1$ for $0\leq \tau< l_1.$
\end{proof}

With the above preparation, we now give the proof of Theorem \ref{H=H}.

\noindent{\textit{Proof of Theorem \ref{H=H}.}
From Construction \ref{cons_II}, it is readily seen that
$\mathbf{u}$ is a uniform sequence of length $2l$ over $\mathbb{Z}_l$.

We start with the proof of $H_{\mathbf{u}} \leq H_{\bm{\pi}_m}$.
Since $H_{\mathbf{u}}(\tau)=H_{\mathbf{u}}(2l-\tau)$ for $0< \tau<2l$, it suffices to calculate $H_{\mathbf{u}}(\tau)$ for $0<\tau\leq l$.
Depending on the value of $\tau$, we consider two cases: $\tau=rl_1$ and $(r-1)l_1<\tau<rl_1$ for a certain integer $r$ with $0<r\leq m$.

$\mathbf{Case\,\, 1}$. When $\tau=rl_1$, denoting $\bu=\bu^{0}\Vert \bu^1 \Vert \cdots \Vert \bu^{2m-1}$, we have
\begin{align*}
H_{\mathbf{u}}(\tau)
&=\sum\limits_{\kappa=0}^{2m-1}H_{\mathbf{u}^{\kappa},\mathbf{u}^{\kappa+r}}(0),
\end{align*} where the superscripts are taken modulo $2m$. Obviously, $\mathbf{u}^{\kappa}\neq\mathbf{u}^{\kappa+r}$ since $0<r\leq m$.

From Lemma \ref{H} \rm{(iii)} and Construction \ref{cons_II}, $H_{\mathbf{u}^{\kappa},\mathbf{u}^{\kappa+r}}(0)=1$ if and only if
$\{\mathbf{u}^{\kappa},\mathbf{u}^{\kappa+r}\} = \{\bs^j, \bt^j\}$ for some $0\leq j<m$, i.e.,
$\pi_m(\kappa)=\pi_m(\kappa+r)$,
and $H_{\mathbf{u}^{\kappa},\mathbf{u}^{\kappa+r}}(0)=0$ otherwise. Hence
\begin{equation}\label{key}
H_{\mathbf{u}}(\tau) =\sum\limits_{\kappa=0}^{2m-1}H_{\mathbf{u}^{\kappa},\mathbf{u}^{\kappa+r}}(0) = \sum\limits_{\kappa=0}^{2m-1}h[\pi_m(\kappa), \pi_m(\kappa+r)] = H_{\bm{\pi}_m}(r).
\end{equation}

$\mathbf{Case\,\, 2}$. When $(r-1)l_1<\tau<rl_1$, taking $\tau_1=\tau-(r-1)l_1$,  we have $0 < \tau_1 < l_1$ and
\begin{align*}
H_{\mathbf{u}}(\tau)
&=\sum\limits_{\kappa=0}^{2m-1}\left(\sum\limits_{i=0}^{l_1-1-\tau_1}h\left[u^{\kappa}_i,u^{\kappa+r-1}_{i+\tau_1}\right]
+\sum\limits_{i=l_1-\tau_1}^{l_1-1}h\left[u^{\kappa}_i,u^{\kappa+r}_{i+\tau_1-l_1}\right]\right).
\end{align*}
Since $\bm{\pi}_m$ is a uniform FHS of length $2m$ over $\{0,1,\dots, m-1\}$, $\pi_m(\kappa)=\pi_m(\kappa+r-1)$ and $\pi_m(\kappa)=\pi_m(\kappa+r)$ cannot both hold simultaneously when $r \neq 1$.
If $r=1$, then $\pi_m(\kappa)=\pi_m(\kappa+r-1)=\pi_m(\kappa+r)$ is possible, thus
\begin{align*}
H_{\mathbf{u}}(\tau)
&=\sum\limits_{\kappa=0}^{2m-1}\left(\sum\limits_{i=0}^{l_1-1-\tau_1}h\left[u^{\kappa}_i,u^{\kappa}_{i+\tau_1}\right]
+\sum\limits_{i=l_1-\tau_1}^{l_1-1}h\left[u^{\kappa}_i,u^{\kappa+1}_{i+\tau_1-l_1}\right]\right)\\
&=\sum\limits_{\kappa=0}^{2m-1}\sum\limits_{i=l_1-\tau_1}^{l_1-1}h\left[u^{\kappa}_i,u^{\kappa+1}_{i+\tau_1-l_1}\right]\\
&\leq \sum\limits_{\kappa=0}^{2m-1}h[\pi_m(\kappa), \pi_m(\kappa+1)]\\
& = H_{\bm{\pi}_m}(1).
\end{align*}

If $r \neq 1$, let $H_{\bm{\pi}_m}=h$, there are at most $2h$ different $\kappa_v, \kappa'_v \in \mathbb{Z}_{2m}$, $1 \leq v \leq h$, such that
$\pi_m(\kappa_v)=\pi_m(\kappa_v+r-1)$ and $\pi_m(\kappa'_v)=\pi_m(\kappa'_v+r)$. Then
\begin{align*}
H_{\mathbf{u}}(\tau)
&\leq \sum\limits_{v=1}^{h} \left(\sum\limits_{i=0}^{l_1-1-\tau_1}h\left[u^{\kappa_v}_i,u^{\kappa_v+r-1}_{i+\tau_1}\right]
+\sum\limits_{i=l_1-\tau_1}^{l_1-1}h\left[u^{\kappa'_v}_i,u^{\kappa'_v+r}_{i+\tau_1-l_1}\right]\right).
\end{align*}
Based on the proof of Lemma \ref{H} (iii), we know that for $0 \leq j < m$, $h[s^j_i, t^j_{i+\tau}]=1$ if and only if $i \equiv -\tau e_2(e_2-e_1)^{-1}\mod{l_1}$, the one-to-one correspondence between $i$ and $\tau$ is independent of the superscript $j$.
Therefore, for a given $\tau$, $\sum\limits_{i=0}^{l_1-1-\tau_1}h[s^j_i, t^j_{i+\tau}]$ and $\sum\limits_{i=l_1-\tau_1}^{l_1-1}h[s^j_i, t^j_{i+\tau-l_1}]$ cannot both equal $1$ simultaneously.

Similarly,  $h[t^j_i, s^j_{i+\tau}]=1$ if and only if $i \equiv -\tau e_1(e_1-e_2)^{-1}\mod{l_1}$.

Since $H_{\mathbf{u}}=\max\limits_{0< \tau< 2l}H_{\mathbf{u}}(\tau)$, when $(r-1)l_1<\tau<rl_1$, $H_{\mathbf{u}}(\tau)$ can reach its maximum value if all $h[s^j_i, t^j_{i+\tau}]$ and $h[t^j_i, s^j_{i+\tau}]$ are distributed as much as possible across two different summation ranges.
This means that
\begin{align*}
H_{\mathbf{u}}(\tau)
&\leq \sum\limits_{v=1}^{h} \left( \sum\limits_{i=0}^{l_1-1-\tau_1}h\left[s^{\pi_m(\kappa_v)}_i,t^{\pi_m(\kappa_v+r-1)}_{i+\tau_1}\right]
+\sum\limits_{i=l_1-\tau_1}^{l_1-1}h\left[t^{\pi_m(\kappa'_v)}_i,s^{\pi_m(\kappa'_v+r)}_{i+\tau_1-l_1}\right]\right)
\end{align*}
or
\begin{align*}
H_{\mathbf{u}}(\tau)
&\leq \sum\limits_{v=1}^{h} \left( \sum\limits_{i=0}^{l_1-1-\tau_1}h\left[t^{\pi_m(\kappa_v)}_i,s^{\pi_m(\kappa_v+r-1)}_{i+\tau_1}\right]
+\sum\limits_{i=l_1-\tau_1}^{l_1-1}h\left[s^{\pi_m(\kappa'_v)}_i,t^{\pi_m(\kappa'_v+r)}_{i+\tau_1-l_1}\right]\right).
\end{align*}
The proofs of the two inequalities above are similar, we will only present the proof of the first one.

Suppose there are $i_0 \in [0, l_1-1-\tau_1]$ and $i_1 \in [l_1-\tau_1, l_1-1]$ such that
$h\left[s^{j}_{i_0},t^{j}_{i_0+\tau_1}\right]=1$ and $h\left[t^{k}_{i_1},s^{k}_{i_1+\tau_1-l_1}\right]=1$, equivalently,
$$
i_0 \equiv -\tau_1 e_2(e_2-e_1)^{-1}\mod{l_1}
$$
and
$$
i_1 \equiv -(\tau_1-l_1) e_1(e_1-e_2)^{-1}\mod{l_1}.
$$
Then $i_0+i_1 \equiv -\tau_1 \mod{l_1}$. Since $i_0+i_1 \in [l_1-\tau_1, 2l_1-\tau_1-2]$, we have $i_0+i_1 = l_1 -\tau_1$, which implies $i_0=0$ and $i_1 = l_1-\tau_1$.
Thus, $l_1 \mid -\tau_1 e_2(e_2-e_1)^{-1}$. Since $\gcd(l_1, e_2)=\gcd(l_1, e_2-e_1)=1$, we have $l_1 \mid \tau_1$.
This contradicts $0<\tau_1<l_1$.
Hence $\sum\limits_{i=0}^{l_1-1-\tau_1}h\left[s^{\pi_m(\kappa_v)}_{i},t^{\pi_m(\kappa_v+r-1)}_{i+\tau_1}\right]=1$ and $\sum\limits_{i=l_1-\tau_1}^{l_1-1}h\left[t^{\pi_m(\kappa'_v)}_{i},s^{\pi_m(\kappa'_v+r)}_{i+\tau_1-l_1}\right]=1$ cannot both hold simultaneously.
Therefore, $H_{\mathbf{u}}(\tau)\leq h = H_{\bm{\pi}_m}$.

Now we will show $ H_{\bm{\pi}_m} \leq H_{\mathbf{u}}$.
According to Lemma \ref{H}, $H_{\mathbf{s}^j, \mathbf{t}^j}=1$ for $0\leq j < m$, otherwise, the Hamming correlation of those length-$l_1$ subsequences is $0$.
 Assume $H_{\mathbf{u}}=h'$.
 For any $0< \tau <2l$, there are at most $h'$ different superscripts $0\leq j_z< m$ and  $1\leq z \leq h'$ such that $H_{\mathbf{s}^{j_z}, \mathbf{t}^{j_z}}(\tau)=1$. That is to say, for any time delay $0< \tau <2l$, at most $h'$ pairs of subsequences with the same superscripts can overlap in the sequence $\mathbf{u}$.
Additionally, the superscripts of these length-$l_1$ subsequences in the sequence $\mathbf{u}$ correspond to the sequence $\bm{\pi}_m$. It follows that $H_{\bm{\pi}_m}(\tau)\leq h'$ for $0< \tau <2l$.
Note that $m < l$, then $H_{\bm{\pi}_m}(\tau)\leq h'$ for $0< \tau <2m$, i.e., $ H_{\bm{\pi}_m} \leq H_{\mathbf{u}}$.

Combing the above analysis we obtain
$$
 H_{\bm{\pi}_m} = H_{\mathbf{u}}.
$$

For the gaps of sequences $\mathbf{s}^j$ and $\mathbf{t}^k$, it is clear that
$$|s^j_{i+1}-s^j_i|>d_1-1\,\,\, \text{and} \,\,\,|t^k_{i+1}-t^k_i|>d_2-1,$$
where $0\leq j,\,\,k< m$ and $0\leq i\leq l_1-2$.

From Construction \ref{cons_II}, the possible concatenations in $\mathbf{u}$ are $\mathbf{s}^j\Vert\mathbf{s}^k$, $\mathbf{s}^j\Vert\mathbf{t}^k$, $\mathbf{t}^j\Vert\mathbf{s}^k$ and $\mathbf{t}^j\Vert\mathbf{t}^k$, where $0\leq j,\,\,k< m$.
For the gaps at the concatenating positions, we have
\begin{equation*}\label{key}
\begin{array}{l}
\left|s^j_{l_1-1}-s^k_0\right|=\left|(l_1-1)d_1+j-k\right|=|l-d_1+j-k|, \\
\left|s^j_{l_1-1}-t^k_0\right|=\left|(l_1-1)d_1+j-k\right|=|l-d_1+j-k|, \\
\left|t^j_{l_1-1}-s^k_0\right|=\left|(l_1-1)d_2+j-k\right|=|l-d_2+j-k|, \\
\left|t^j_{l_1-1}-t^k_0\right|=\left|(l_1-1)d_2+j-k\right|=|l-d_2+j-k|,
\end{array}
\end{equation*}
where $j\neq k$ in the first and last equations, $(l_1-1)d_1 = le_1 - d_1 \equiv l -d_1 \mod l$ and $(l_1-1)d_2 = le_2 - d_2 \equiv l -d_2 \mod l$.

Next, we will discuss the minimum value of the four aforementioned gaps. For given positive integers $l$, $d_1$, $d_2$ with $d_1<d_2$ and $\gcd(l, d_1) = \gcd(l, d_2) = \gcd(l, d_2-d_1) = m$, we have $l - d_1 > l - d_2 \geq m$. Note that $-(m-1)\leq j-k\leq m-1$. Then $(l-d_1+j-k)>(l-d_2+j-k)>0$. It follows that
$$
\min_{0\leq j,k<m}\{|l-d_1+j-k|, |l-d_2+j-k|\}=\min_{0\leq j,k<m}\{l-d_2+j-k\}.
$$
In order to find the minimum gap, it is sufficient to minimize $j-k$, i.e., $j=0$ and $k=m-1$. Thus,
$$
\min_{0\leq j,k<m}\{l-d_2+j-k\}=l-d_2-m+1.
$$
As a result, $l-d_2-m+1>d_1-1$ since $d_1+d_2<l-m+2$. Hence the sequence $\mathbf{u}$ has minimum gap $d_1-1$. \hfill$\square$

From Theorem \ref{H=H}, we know that the Hamming correlation of sequences in Construction \ref{cons_II} is equal to the Hamming correlation of $\bm{\pi}_m$. Based on this, the optimal uniform WGFHSs with parameters $(2l, l, 2)$ can be determined.
\begin{coro}\label{H=H=2}
Let $\ell$, $d_1$, $d_2$ and $m$ be four positive integers with $\gcd(\ell, d_1) = \gcd(\ell, d_2) = \gcd(\ell, d_2-d_1) = m$. Then the sequence $\mathbf{u}$ defined by Construction \ref{cons_II} is a
$(2\ell, \ell, 2)$-FHS with minimum gap $d_1-1$, where $d_1+d_2<\ell-m+2$,
 if and only if $\bm{\pi}_m$ is a uniform $(2m, m, 2)$-FHS.
\end{coro}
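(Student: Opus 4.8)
The plan is to read the corollary off directly from Theorem~\ref{H=H} together with the Lempel--Greenberger bound. First I would record the two facts already established in (the proof of) Theorem~\ref{H=H}: the sequence $\mathbf{u}$ is a uniform sequence of length $2\ell$ over $\mathbb{Z}_\ell$ with $H_{\mathbf{u}}=H_{\bm{\pi}_m}$, and, since the hypothesis $d_1+d_2<\ell-m+2$ is in force, $\mathbf{u}$ has minimum gap exactly $d_1-1$. Thus the only part of the statement not already supplied by Theorem~\ref{H=H} is the translation of ``$\mathbf{u}$ is a $(2\ell,\ell,2)$-FHS'' into ``$\bm{\pi}_m$ is a uniform $(2m,m,2)$-FHS''.

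Next I would observe that $\bm{\pi}_m$ is \emph{automatically} a uniform FHS of length $2m$ over $\mathbb{Z}_m$: since $\bm{\pi}$ is a permutation of $\{0,1,\dots,2m-1\}$, for each residue $c\in\{0,1,\dots,m-1\}$ there are exactly two indices $i$ with $\pi(i)\equiv c \pmod m$, namely the indices with $\pi(i)\in\{c,c+m\}$. Hence every symbol occurs exactly twice in $\bm{\pi}_m$, so $N_{f_k}(\bm{\pi}_m)$ is constant and $m\mid 2m$. Consequently the qualifier ``uniform'' in the statement is harmless, and the corollary reduces to the equivalence $H_{\mathbf{u}}=2 \iff H_{\bm{\pi}_m}=2$, together with the minimum-gap assertion which is already given.

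To finish I would invoke the Lempel--Greenberger bound (Lemma~\ref{LG}). Applied to a sequence of length $2\ell$ over an alphabet of size $\ell$ it gives $H_{\mathbf{u}}\ge\bigl\lceil \tfrac{2\ell\cdot\ell}{\ell(2\ell-1)}\bigr\rceil=2$, and applied to a sequence of length $2m$ over an alphabet of size $m$ it gives $H_{\bm{\pi}_m}\ge 2$ as well; in particular a $(2m,m,2)$-FHS is optimal, so the right-hand side of the equivalence is non-vacuous. Combining $H_{\mathbf{u}}=H_{\bm{\pi}_m}$ with these two lower bounds, $H_{\mathbf{u}}=2$ holds if and only if $H_{\bm{\pi}_m}=2$, i.e., if and only if $\bm{\pi}_m$ is a $(2m,m,2)$-FHS. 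Together with the uniformity of $\bm{\pi}_m$ and the minimum-gap statement from Theorem~\ref{H=H}, this is precisely the assertion of the corollary, and both directions of the ``if and only if'' follow at once.

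As for difficulty, there is no genuine obstacle here: the content is carried entirely by Theorem~\ref{H=H}, and the corollary is essentially bookkeeping. The one point that needs a moment of care is recognizing that the uniformity of $\bm{\pi}_m$ is forced by $\bm{\pi}$ being a permutation (so the equivalence is not secretly hiding an extra hypothesis), and that a $(2m,m,2)$-FHS meets the Lempel--Greenberger bound, so that ``$(2m,m,2)$-FHS'' coincides with ``optimal FHS of length $2m$ over $\mathbb{Z}_m$''.
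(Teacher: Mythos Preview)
Your proposal is correct and matches the paper's approach: the paper does not give a separate proof of the corollary but simply remarks, immediately before stating it, that since Theorem~\ref{H=H} gives $H_{\mathbf{u}}=H_{\bm{\pi}_m}$, the optimal $(2\ell,\ell,2)$-WGFHSs are determined by $\bm{\pi}_m$. You have spelled out the bookkeeping (automatic uniformity of $\bm{\pi}_m$, the Lempel--Greenberger bound yielding $2$ in both cases) more explicitly than the paper does, but the argument is the same.
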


\begin{remark}
	Notice that for any permutation $\bm{\pi}$ of $\{0,1,\dots, 2m-1\}$, we obtain a unique length-$2m$ sequence $\bm{\pi}_m$.
	On the other hand, given a uniform  length-$2m$  sequence over $\{0,1,\dots, m-1\}$, denoted as $\bm{\pi}_m$,  there are in total $2^m$
	different permutations $\bm{\pi}$ corresponding to $\bm{\pi}_m$.
	Therefore, one optimal uniform $(2m, m, 2)$-FHS $\bm\pi_m$ will yield $2^m$ optimal $(2\ell, \ell, 2)$-FHSs with minimum gap $d_1-1$
	when $d_1$ and $d_2$ are properly chosen.
\end{remark}

The following example demonstrates Theorem \ref{H=H} and Corollary \ref{H=H=2}.

\begin{example}\label{eg}

For $m=3$, Table \ref{Tab m=3} lists all $\bm{\pi}_m$ that can be used to construct optimal FHSs with length $2\ell$, where $m \mid \ell$.
\begin{table}[H]\vspace {-0.5em}
\label{Tab m=3}\centering
\caption{Optimal\,\,\,uniform\,\,\,$\bm{\pi}_3$}
\label{Tab m=3}
\begin{tabular}{|c|}
  \hline
   $\bm{\pi}_3$ \\
  \hline
  $(0, 0, 1, 2, 1, 2)$ \,\,\,$(0, 0, 1, 2, 2, 1)$ \,\,\,$(0, 0, 2, 1, 1, 2)$ \,\,\,$(0, 0, 2, 1, 2, 1)$\\
  $(0, 1, 0, 1, 2, 2)$ \,\,\,$(0, 1, 0, 2, 1, 2)$ \,\,\,$(0, 1, 0, 2, 2, 1)$ \,\,\,$(0, 1, 1, 0, 2, 2)$\\
  $(0, 1, 1, 2, 0, 2)$ \,\,\,$(0, 1, 2, 0, 2, 1)$ \,\,\,$(0, 1, 2, 1, 0, 2)$ \,\,\,$(0, 2, 0, 2, 1, 1)$\\
  \hline
\end{tabular}
\end{table}\vspace {-1.0em}
Each $\bm{\pi}_3$ corresponds to 8 different permutations $\bm{\pi}$ on $\{0,1,\dots, 5\}$.
These permutations are used to choose $\bs^j$ and $\bt^k$ from the matrices $S$ and $T$, respectively, to obtain the sequence $\bu$.
To demonstrate the relationship more clearly, we will show the specific result of $\bm{\pi}_3 = (0, 0, 1, 2, 1, 2)$.

For $\bm{\pi}_3 = (0, 0, 1, 2, 1, 2)$, there are 8 different permutations $\bm{\pi}$:
 $$(0, 3, 1, 2, 4, 5), \,\,\,\,(3, 0, 1, 2, 4, 5), \,\,\,\,(0, 3, 4, 2, 1, 5), \,\,\,\,(0, 3, 1, 5, 4, 2),$$
 $$(3, 0, 4, 2, 1, 5), \,\,\,\,(3, 0, 1, 5, 4, 2), \,\,\,\,(0, 3, 4, 5, 1, 2), \,\,\,\,(3, 0, 4, 5, 1, 2).$$
These permutations correspond to the following sequences $\bu$:
 $$\mathbf{s}^0\Vert\mathbf{t}^0\Vert\mathbf{s}^1\Vert\mathbf{s}^2\Vert\mathbf{t}^1
\Vert\mathbf{t}^2, \quad \mathbf{t}^0\Vert\mathbf{s}^0\Vert\mathbf{s}^1\Vert\mathbf{s}^2\Vert\mathbf{t}^1
\Vert\mathbf{t}^2, \quad \mathbf{s}^0\Vert\mathbf{t}^0\Vert\mathbf{t}^1\Vert\mathbf{s}^2\Vert\mathbf{s}^1
\Vert\mathbf{t}^2, \quad \mathbf{s}^0\Vert\mathbf{t}^0\Vert\mathbf{s}^1\Vert\mathbf{t}^2\Vert\mathbf{t}^1
\Vert\mathbf{s}^2,$$
 $$\mathbf{t}^0\Vert\mathbf{s}^0\Vert\mathbf{t}^1\Vert\mathbf{s}^2\Vert\mathbf{s}^1
\Vert\mathbf{t}^2, \quad \mathbf{t}^0\Vert\mathbf{s}^0\Vert\mathbf{s}^1\Vert\mathbf{t}^2\Vert\mathbf{t}^1
\Vert\mathbf{s}^2, \quad \mathbf{s}^0\Vert\mathbf{t}^0\Vert\mathbf{t}^1\Vert\mathbf{t}^2\Vert\mathbf{s}^1
\Vert\mathbf{s}^2, \quad \mathbf{t}^0\Vert\mathbf{s}^0\Vert\mathbf{t}^1\Vert\mathbf{t}^2\Vert\mathbf{s}^1
\Vert\mathbf{s}^2.$$
Without loss of generality, we consider $\bm{\pi} = (0, 3, 1, 2, 4, 5)$ and $\bu = \mathbf{s}^0\Vert\mathbf{t}^0\Vert\mathbf{s}^1\Vert\mathbf{s}^2\Vert\mathbf{t}^1
\Vert\mathbf{t}^2$.

{\rm(i)}  When $\ell=21$, $d_1=6$ and $d_2=9$, from Construction \ref{cons_II}, we have
$$\mathbf{s}^0=(0, 6, 12, 18, 3, 9, 15),\,\,\, \mathbf{s}^1=(1, 7, 13, 19, 4, 10, 16),\,\,\, \mathbf{s}^2=(2, 8, 14, 20, 5, 11, 17),$$
$$\mathbf{t}^0=(0, 9, 18, 6, 15, 3, 12),\,\,\, \mathbf{t}^1=(1, 10, 19, 7, 16, 4, 13),\,\,\, \mathbf{t}^2=(2, 11, 20, 8, 17, 5, 14)$$
and
 \[
\begin{split}
\mathbf{u} &= (0, 6, 12, 18, 3, 9, 15, 0, 9, 18, 6, 15, 3, 12, 1, 7, 13, 19, 4, 10, 16, 2, 8, 14, 20, 5, 11, 17, 1, 10, \\
&\,\,\,\,\,\,\,\,\,19, 7, 16, 4, 13, 2, 11, 20, 8, 17, 5, 14).
\end{split}
\]
Obviously, the sequence $\mathbf{u}$ is an FHS with length $42$ and minimum gap $5$ over $\mathbb{Z}_{21}$.
We can calculate that $H_{\mathbf{u}}(\tau)\leq 2$ for $0< \tau <42$. Thus, $\mathbf{u}$ is an optimal $(42, 21, 2)$-WGFHS with minimum gap $5$.

{\rm(ii)}  When $\ell=15$, $d_1=6$ and $d_2=9$, we note that $d_1+d_2 > l-m+2$ and the subsequences are as follows:
$$\mathbf{s}^0=(0, 6, 12, 3, 9),\,\,\, \mathbf{s}^1=(1, 7, 13, 4, 10),\,\,\, \mathbf{s}^2=(2, 8, 14, 5, 11),$$
$$\mathbf{t}^0=(0, 9, 3, 12, 6),\,\,\, \mathbf{t}^1=(1, 10, 4, 13, 7),\,\,\, \mathbf{t}^2=(2, 11, 5, 14, 8).$$
For the FHS
\[
\begin{split}
\mathbf{u} = (0, 6, 12, 3, 9, 0, 9, 3, 12, 6, 1, 7, 13, 4, 10, 2, 8, 14, 5, 11, 1, 10, 4, 13, 7, 2, 11, 5, 14, 8),
\end{split}
\]
it is easy to verify that $\mathbf{u}$ is optimal with respect to the Lempel-Greenberger bound, and that its minimum gap is $4$, rather than $d_1-1=5$.

The above discussion is consistent with Theorem \ref{H=H} and Corollary \ref{H=H=2}.

\end{example}

\begin{remark}
 In Theorem \ref{H=H}, when $d_1 = \lfloor \frac{\ell-1}{2} \rfloor$ and $d_2 = d_1+1$, the minimum gap can reach the tight upper bound $\lfloor \frac{\ell-1}{2} \rfloor-1$. This result is consistent with reference \cite{PC22}.
Since $\gcd(\ell, d_1) = \gcd(\ell, d_2) = \gcd(\ell, d_2-d_1) = m$ and $d_1+d_2< \ell-m+2$, we can deduce that $d_1< \frac{\ell}{2}-m+1$. This implies that the minimum gap is less than $\frac{\ell}{2}-m$.
Consequently, as the common factor $m$ increases, the minimum gap of FHSs in Theorem \ref{H=H} decreases.
\end{remark}

It is worth noting that starting from a different matrix $A' = A+k$, the $j$-th row of $A'$ is shift-equivalent to the $(j+k)_m$-th row of $A$, where $0\leq k<l$, $0\leq j \leq m-1$, and $A+k$ represents the operation of adding $k$ to every entry in the matrix $A$.
Let $S'=\bm{\rho}_1(A') = [\mathbf{s}'^0, \mathbf{s}'^1, \dots, \mathbf{s}'^{m-1}]^T$ and $T'=\bm{\rho}_2(A') = [\mathbf{t}'^0, \mathbf{t}'^1, \dots, \mathbf{t}'^{m-1}]^T$. Based on this, Construction \ref{cons_II} will yield the FHS $\bu'$ with a permutation $\bm{\pi}'$ on $\{0, 1, \dots, 2m-1\}$. Because the $j$-th row of $S'$, $T'$ is shift-equivalent to the $(j+k)_m$-th row of $S$, $T$, respectively, the sequences $\mathbf{s}'^j$ and $\mathbf{t}'^j$ also satisfy Lemma \ref{H}. Then $H_{\mathbf{u}'} = H_{\bm{\pi}'_m}$.

\subsection{Applications from known constructions}
Given an optimal uniform $(2m, m, 2)$-FHS $\bm{\pi_m}$ and selecting proper parameters $d_1$ and $d_2$, the optimal uniform $(2l, l, 2)$-WGFHS $\bu$ can be constructed by Construction \ref{cons_II}.
The following discusses how this construction is explicitly applied to existing optimal FHSs of even length.

\textit{Application I (Construction $B1$ from \cite{JK09})}:
let $\phi$ be a permutation on $\mathbb{Z}_N$ and $DU({\mathbb{Z}_N})$ a difference unit
set of $\mathbb{Z}_N$, where $N=p_1^{e_1}p_2^{e_2}\cdots p_r^{e_r}$. For $i\in \{0\}\cup DU({\mathbb{Z}_N})$, define a sequence $\mathbf{x}_i=\{x_i(t)\}^{N-1}_{t=0}$
 as follows:
\begin{align*}
x_i(t)=
\begin{cases}
0, \,\,\,\,\,\, \,\,\,\,\,\, if \,\,\,i = 0,\\
\phi(it), \,\,\,if \,\,\,i\in DU(\mathbb{Z}_N).
\end{cases}
\end{align*}
Let $\varepsilon$ be a one-to-one map from $\mathbb{Z}_k$ to $DU(\mathbb{Z}_N)$ for $k = p_1-1$. Let $\mathbf{y}=\{y(t)\}^{kN-1}_{t=0}$
be the FHS of length $kN$ over $\mathbb{Z}_N$ defined by
$$
y(t) = y(t_0,t_1) = x_{\varepsilon(t_0)}(t_1+\gamma(t_0)),
$$
where $t_0=(t)_k$, $t_1=(t)_N$, and $\gamma$ is an arbitrary function from $\mathbb{Z}_k$
to $\mathbb{Z}_N$. Then the sequence $\mathbf{y}$ is an optimal uniform $(kN, N, k)$-FHS.

Let $k=2$, $\phi(i)=i$, $\varepsilon(0)=1$, $\varepsilon(1)=2$ and $\gamma(t_0)=0$.
Then we obtain a length-$2N$ FHS $\mathbf{y} = (x_1(0), x_2(1), x_1(2), x_2(3), \dots, x_1(N-1), x_2(0), x_1(1), x_2(2), x_1(3), \dots, x_2(N-1)) = (0, 2, 2, 6, \dots, N-1, 0, 1, 4, 3, \dots, 2(N-1))$ over $\mathbb{Z}_N$.

For $N=9$, we have $\mathbf{y}=(0, 2, 2, 6, 4, 1, 6, 5, 8, 0, 1, 4, 3, 8, 5, 3, 7, 7)$.
Let $\bm{\pi}_N = \mathbf{y}$. Note that there are $2^N$ different permutations $\bm{\pi}$ corresponding to $\bm{\pi}_N$.
 These permutations can be discussed similarly.
Without loss of generality, let $\bm{\pi}=(0, 2, 11, 6, 4, 1, 15, 5, 8, 9, 10,\\
 13, 3, 17, 14, 12, 7, 16)$.

When $l=27$, $d_1=9$ and $d_2=18$,  from Construction \ref{cons_II}, we have $\mathbf{s}^j=(j, j+9, j+18)$, $\mathbf{t}^k=(k, k+18, k+9)$, $0 \leq j, k \leq 8$ and
\begin{align*}
\bu
&=\mathbf{s}^0\Vert\mathbf{s}^2\Vert\mathbf{t}^2\Vert\mathbf{s}^6\Vert\mathbf{s}^4\Vert\mathbf{s}^1\Vert
\mathbf{t}^6\Vert\mathbf{s}^5\Vert\mathbf{s}^8\Vert\mathbf{t}^0\Vert\mathbf{t}^1\Vert\mathbf{t}^4\Vert\mathbf{s}^3
\Vert\mathbf{t}^8\Vert\mathbf{t}^5\Vert\mathbf{t}^3\Vert\mathbf{s}^7\Vert\mathbf{t}^7\\
&=(0, 9, 18, 2, 11, 20, 2, 20, 11, 6, 15, 24, 4, 13, 22, 1, 10, 19, 6, 24, 15, 5, 14, 23, 8, 17, 26,\\
&\,\,\,\,\,\,\,\,\,0, 18, 9, 1, 19, 10, 4, 22, 13, 3, 12, 21, 8, 26, 17, 5, 23, 14, 3, 21, 12, 7, 16, 25, 7, 25, 16).
\end{align*}
The sequence $\bu$ is an optimal uniform $(54, 27, 2)$-FHS with minimum gap $4$.

\textit{Application II (Theorem 4 from \cite{CJ08})}:
let $q = ef + 1$ be a prime power and $C_0$, $C_1$, $\dots$, $C_{e-1}$ be the cyclotomic classes of order $e$ with respect to $\mathbb{F}_q$ defined by a generator $\omega$ of $\mathbb{F}^*_q$. Define the function from $\mathbb{F}_q$ to $\mathbb{Z}_{ef}$ by
\begin{align*}
\xi(x)=
\begin{cases}
l, \,\,\, \,\,\, \,\,\,\,\,\,\,\,\,\,\, \,\,\, \,\,\,\,\,\,\,\,\,\,if \,\,\,x = 1,\\
\log_\omega(x-1), \,\,\,if \,\,\,x\neq 1,
\end{cases}
\end{align*}
where $l=(q-1)/2$ if $q$ is odd, and $l=0$ if $q$ is even.
For $0\leq z \leq e-1$, define $D_z = \xi(C_z) = \{\xi(x): x\in C_z\}\subset\mathbb{Z}_{ef}$.

Construct an FHS $\mathbf{y}=(y(0), y(1), \dots, y(q-2))$ of length $q-1$ over the alphabet $\mathbb{Z}_e$,
where
$y(i)=z\in \mathbb{Z}_e$ if and only if $i\in D_{z}$ for $0\leq i \leq q-2$.
Then $\mathbf{y}$ is an optimal uniform $(ef, e, f)$-FHS.

For example, setting $q=5^2$, $e=12$ and $\omega$ be the primitive element of $\mathbb{F}_{5^2}$ with minimal polynomial $x^2+4x+2$ over $\mathbb{F}_5$, we obtain an optimal uniform FHS: $\mathbf{y}=(6, 10, 5, 10, 11, 2, 6, 8, 4, 11, 1, 4, 0, 5, 3, 2, 8, 1, 0, 9, 7, 7, 3, 9)$.
Similarly as Application I, we choose $\bm{\pi}=(6, 10, 5, 22, 11, 2, 18, 8, 4, 23, 1, 16, 0, 17, 3, 14, 20, 13, 12, 9, 7, 19, 15, 21)$.

From Construction \ref{cons_II}, letting $l=36$, $d_1=12$ and $d_2=24$, we have $\mathbf{s}^j=(j, j+12, j+24)$, $\mathbf{t}^k=(k, k+24, k+12)$, $0 \leq j, k \leq 11$ and
\begin{align*}
\bu
&=\mathbf{s}^6\Vert\mathbf{s}^{10}\Vert\mathbf{s}^5\Vert\mathbf{t}^{10}\Vert\mathbf{s}^{11}\Vert\mathbf{s}^2\Vert
\mathbf{t}^6\Vert\mathbf{s}^8\Vert\mathbf{s}^4\Vert\mathbf{t}^{11}\Vert\mathbf{s}^1\Vert\mathbf{t}^4\Vert
\mathbf{s}^0\Vert\mathbf{t}^5\Vert\mathbf{s}^3\Vert\mathbf{t}^2\Vert\mathbf{t}^8\Vert\mathbf{t}^1\Vert
\mathbf{t}^0\Vert\mathbf{s}^9\Vert\mathbf{s}^7\Vert\mathbf{t}^7\Vert\mathbf{t}^3\Vert\mathbf{t}^9\\
&=(6, 18, 30, 10, 22, 34, 5, 17, 29, 10, 34, 22, 11, 23, 35, 2, 14, 26, 6, 30, 18, 8, 20, 32, 4, 16, 28,\\
 &\,\,\,\,\,\,\,\,\,11, 35, 23, 1, 13, 25, 4, 28, 16, 0, 12, 24, 5, 29, 17, 3, 15, 27, 2, 26, 14, 8, 32, 20, 1, 25, 13, 0,\\
  &\,\,\,\,\,\,\,\,\, 24, 12, 9, 21, 33, 7, 19, 31, 7, 31, 19, 3, 27, 15, 9, 33, 21).
\end{align*}
The sequence $\bu$ is an optimal uniform $(72, 36, 2)$-FHS with minimum gap $2$.

\textit{Application III (Construction $\mathcal{B}$ from \cite{JK10})}:
let $p$ be an odd prime and $\alpha$ a primitive element of $\mathbb{F}_p$, then $\mathbb{F}_p$ can be decomposed into $\mathbb{F}_p = \{0\} \cup D_0 \cup D_1$ where
\begin{align*}
&D_0=\left\{\alpha^{2i} | 0\leq i \leq \frac{p-1}{2}-1\right\},\\
&D_1=\left\{\alpha^{2i+1} | 0\leq i \leq \frac{p-1}{2}-1\right\}.
\end{align*}
For any positive integer $k$ with $2\leq k < p$, let $B=\{B(t)\}^{k-1}_{t=0}$ be a binary sequence
consisting of $0$ and $1$ satisfying $\sum\limits_{t=0}^{k-1}(-1)^{B(t)-B(t+\tau)} \leq 0$ for any $1 \leq \tau \leq k-1$.
Let $\mathbf{x}=\{x(t)\}^{k-1}_{t=0}$ be a sequence over $\mathbb{F}_p$ satisfying
\begin{align*}
x(t)\in
\begin{cases}
D_0,\,\,\, if \,\,\, B(t)=0,\\
D_1,\,\,\, if \,\,\, B(t)=1,
\end{cases}
\end{align*}
and $x(t_i) \neq x(t_j)$ for any $0 \leq t_i \neq t_j \leq k-1$.
The FHS $\mathbf{y}=\{y(t)\}^{kp-1}_{t=0}$ over $\mathbb{F}_p$ is defined as
$$
y(t) = y(t_0,t_1) = x(t_0)\cdot t^2_1,
$$
where $t_0 = (t)_k$ and $t_1 = (t)_p$.
It is worth mentioning that the sequences $\mathbf{y}$ in Construction $\mathcal{B}$ are not always uniform, but a uniform sequence can be obtained with a proper selection of parameters.

For example, let $k=2$, $p=5$, $B=(0, 1)$ and $\mathbb{F}_5 = \{0\} \cup D_0 \cup D_1$, where $D_0=\{1, 4\}$ and $D_1=\{2, 3\}$. Then selecting $\mathbf{x}=(1, 3)$ gives $\mathbf{y}=(0, 3, 4, 2, 1, 0, 1, 2, 4, 3)$, which is an optimal uniform $(10, 5, 2)$-FHS.
Based on this, we can construct optimal uniform $(2l,l,2)$-WGFHSs by Construction \ref{cons_II}.

For $l=25$, $d_1=5$ and $d_2=15$, we have $m=5$, $d_1+d_2 < l-m+2$ and $\bm{\pi}_5=(0, 3, 4, 2, 1, 0, 1, 2, 4, 3)$.
Without loss of generality, we take $\bm{\pi}=(0, 3, 4, 2, 1, 5, 6, 7, 9, 8)$, and then
\begin{align*}
\bu&=\mathbf{s}^0\Vert\mathbf{s}^3\Vert\mathbf{s}^4\Vert\mathbf{s}^2\Vert\mathbf{s}^1\Vert\mathbf{t}^0\Vert
\mathbf{t}^1\Vert\mathbf{t}^2\Vert\mathbf{t}^4\Vert\mathbf{t}^3\\
&=(0, 5, 10, 15, 20, 3, 8, 13, 18, 23, 4, 9, 14, 19, 24, 2, 7, 12, 17, 22, 1, 6, 11, 16, 21, 0, 15, 5, 20, \\
&\,\,\,\,\,\,\,\,\,10, 1, 16, 6, 21, 11, 2, 17, 7, 22, 12, 4, 19, 9, 24, 14, 3, 18, 8, 23, 13).
\end{align*}
It can be easily verified that $\bu$ is an optimal uniform $(50, 25, 2)$-FHS with minimum gap $d_1-1 = 4$.

At the end of this section, we make a comparison between the FHSs constructed in \cite{JK09,CJ08,JK10} and the FHSs in this paper. After selecting proper parameters and functions, the first three FHSs can be employed as concatenation-ordering sequences to generate the one in Construction \ref{cons_II}.
In the first three FHSs, the minimum gap was not taken into consideration, and it can be minus one.  More concretely,
 in Application I, for some $0\leq t \leq kN-1$, if $\varepsilon(t_0)(t_1+\gamma(t_0)) \equiv \varepsilon(t_0+1)(t_1+1+\gamma(t_0+1)) \mod N$, then $y_t = y_{t+1}$, where the operation $t_0+1$ is taken modulo $k$.
In Application II, for $0\leq z\leq e-1$, $0\leq j_1\neq j_2 \leq f-1$ and $(z, j_1), (z, j_2) \neq (0,0)$, when the generator $\omega$ of $\mathbb{F}^*_q$ satisfies $\omega^{z+j_2e+1} - \omega = \omega^{z+j_1e} - 1$,  the sequence $\mathbf{y}$ will have two adjacent frequencies that are the same.
This can be resolved by applying Construction \ref{cons_II}, from which one can obtain sequences with minimum gap $d_1-1$ when $d_1$ and $d_2$ are properly chosen. The following table summarises the comparison among sequences from these constructions.
\begin{table}[H]
\label{Tab FHSs}\centering
\caption{Comparison of optimal $(n, l, \lambda)$-FHSs}
\small
\label{Tab FHSs}
\begin{tabular}{cccc}
  \hline
   Parameters & Constraints & Minimum gaps & References \\
  \hline
  $(ef, e, f)$ & $q = ef+1$ is a prime power & uncontrolled & \cite{CJ08}\\

  $(kN, N, k)$ & $\setminus$ & uncontrolled & \cite{JK09}\\

  $(kp, p, k)$ & $p$ is an odd prime & uncontrolled & \cite{JK10}\\

  \multirow{2}*{$(2l, l, 2)$} & $\gcd(l, d_1) = \gcd(l, d_2) = \gcd(l, d_2-d_1) = m$, & \multirow{2}*{$d_1-1$} & \multirow{2}*{Theorem \ref{H=H}}\\
      &  $d_1+d_2<l-m+2$  &   &   \\

  \hline
\end{tabular}
\end{table}

\section{Conclusion}
The main objective of this paper is to construct optimal frequency-hopping sequences with controlled minimum gaps. We began with providing tight upper bounds on the minimum gaps of uniform sequences.
We presented a general construction of optimal sequences of length $2l$ and $3l$ for $d_1, d_2$ and $d_3$ taken from a difference unit set, which contains the results of \cite{PC22} as a special case.
Furthermore, we extended it to the case where $\gcd(l, d_1) = \gcd(l, d_2) = \gcd(l, d_2-d_1) = m \geq 2$.
We proposed a recursive construction producing length-$2l$ optimal FHSs from length-$2m$ optimal FHSs, where $m$ divides $l$. This approach allows us to obtain optimal FHSs with controlled minimum gaps from existing constructions of optimal FHSs in \cite{JK09, CJ08, JK10}.

\section*{Acknowledgment}

The authors sincerely thank the anonymous reviewers for their constructive comments and suggestions that significantly improve the quality of this paper. The authors also thank the associate editor, Dr. Gohar Kyureghyan, for the efficient management of the review process.


\begin{thebibliography}{1}
	
\bibitem{MJ94}
M.~K. Simon, J.~K. Omura, R.~A. Scholtz, and B.~K. Levitt, \emph{Spread
  spectrum communications handbook}.\hskip 1em plus 0.5em minus 0.4em\relax New
  York: McGraw-Hill, 1994.

\bibitem{AH74}
A.~Lempel and H.~Greenberger, ``Families of sequences with optimal
  {Hamming}-correlation properties,'' \emph{IEEE Trans. Inf. Theory}, vol.~20,
  no.~1, pp. 90--94, Jan. 1974.

\bibitem{DP04}
D.~Peng and P.~Fan, ``Lower bounds on the {Hamming} auto- and cross
  correlations of frequency-hopping sequences,'' \emph{IEEE Trans. Inf.
  Theory}, vol.~50, no.~9, pp. 2149--2154, Sep. 2004.

\bibitem{RY04}
R.~Fuji-Hara, Y.~Miao, and M.~Mishima, ``Optimal frequency hopping sequences: a
  combinatorial approach,'' \emph{IEEE Trans. Inf. Theory}, vol.~50, no.~10,
  pp. 2408--2420, Oct. 2004.

\bibitem{WC05}
W.~Chu and C.~J. Colbourn, ``Optimal frequency-hopping sequences via cyclotomy,''
  \emph{IEEE Trans. Inf. Theory}, vol.~51, no.~3, pp. 1139--1141, Mar. 2005.

\bibitem{CM07}
C.~Ding, M.~J. Moisio, and J.~Yuan, ``Algebraic constructions of optimal
  frequency-hopping sequences,'' \emph{IEEE Trans. Inf. Theory}, vol.~53,
  no.~7, pp. 2606--2610, Jul. 2007.

\bibitem{GY09}
G.~Ge, Y.~Miao, and Z.~Yao, ``Optimal frequency hopping sequences: Auto- and
  cross-correlation properties,'' \emph{IEEE Trans. Inf. Theory}, vol.~55,
  no.~2, pp. 867--879, Feb. 2009.

\bibitem{JK09}
J.-H. Chung, Y.~K. Han, and K.~Yang, ``New classes of optimal frequency-hopping
  sequences by interleaving techniques,'' \emph{IEEE Trans. Inf. Theory},
  vol.~55, no.~12, pp. 5783--5791, Dec. 2009.

\bibitem{JK10}
J.-H. Chung and K.~Yang, ``Optimal frequency-hopping sequences with new
  parameters,'' \emph{IEEE Trans. Inf. Theory}, vol.~56, no.~4, pp. 1685--1693,
  Apr. 2010.

\bibitem{XH12}
X.~Zeng, H.~Cai, X.~Tang, and Y.~Yang, ``A class of optimal frequency hopping
  sequences with new parameters,'' \emph{IEEE Trans. Inf. Theory}, vol.~58,
  no.~7, pp. 4899--4907, Jul. 2012.

\bibitem{XH13}
X.~\vspace{0mm}Zeng, H.~Cai, X.~Tang, and Y.~Yang, ``Optimal frequency hopping
  sequences of odd length,'' \emph{IEEE Trans. Inf. Theory}, vol.~59, no.~5,
  pp. 3237--3248, May 2013.

\bibitem{CJ08}
C.~Ding and J.~Yin, ``Sets of optimal frequency-hopping sequences,'' \emph{IEEE
  Trans. Inf. Theory}, vol.~54, no.~8, pp. 3741--3745, Aug. 2008.

\bibitem{CR09}
C.~Ding, R.~Fuji-Hara, Y.~Fujiwara, M.~Jimbo, and M.~Mishima, ``Sets of
  frequency hopping sequences: Bounds and optimal constructions,'' \emph{IEEE
  Trans. Inf. Theory}, vol.~55, no.~7, pp. 3297--3304, Jul. 2009.

\bibitem{WZ02}
W.~Mei and Z.~Zhang, ``Construction of families of {FH} sequences with given
  minimum gap,'' \emph{Chinese Journal of Radio Science}, vol.~17, no.~1, pp.
  16--22, Jan. 2002.

\bibitem{Chen83}
W.~Chen, ``Frequency hopping patterns with wide intervals,'' \emph{Journal of
  System Science and Mathematical Science Chinese Series}, vol.~3, no.~4, pp.
  295--303, Dec. 1983.

\bibitem{Mei94}
W.~Mei, ``Families of nonrepeating {FH} sequences with given minimum gap,''
  \emph{Journal of China Institute of Communications}, vol.~15, no.~6, pp.
  63--68, Jun. 1994.

\bibitem{HY09}
H.~Wang, Y.~Zhao, F.~Shen, and W.~Sun, ``The design of wide interval {FH}
  sequences based on {RS} code,'' in \emph{2009 International Conference on
  Mechatronics and Automation}.\hskip 1em plus 0.5em minus 0.4em\relax
  Institute of Electrical and Electronics Engineers (IEEE), Sep. 2009, pp.
  2345--2350.

\bibitem{PCY19}
P.~Li, C.~Fan, Y.~Yang, and Y.~Wang, ``New bounds on wide-gap frequency-hopping
  sequences,'' \emph{IEEE Commun. Lett.}, vol.~23, no.~6, pp. 1050--1053, Jun.
  2019.

\bibitem{QH22}
Q.~Shu, H.~Liu, X.~Liu, Y.~Yang, and W.~Chen, ``Optimal wide-gap-zone frequency
  hopping sequences,'' \emph{Adv. Math. Commun.}, vol.~18, no.~5, pp.
  1379--1389, Nov. 2024.

\bibitem{LC22}
L.~Zhou, C.~Zhang, Q.~Zeng, X.~Liu, and H.~Wu, ``Optimal low-hit-zone
  frequency-hopping sequence sets with wide-gap for {FHMA} systems under
  follower jamming,'' \emph{IEEE Commun. Lett.}, vol.~26, no.~5, pp. 969--973,
  May 2022.

\bibitem{PC22}
P.~Li, C.~Fan, S.~Mesnager, Y.~Yang, and Z.~Zhou, ``Constructions of optimal
  uniform wide-gap frequency-hopping sequences,'' \emph{IEEE Trans. Inf.
  Theory}, vol.~68, no.~1, pp. 692--700, Jan. 2022.

\bibitem{XC25}
X.~Zheng, C.~Fan, Z.~Zhou, S.~Mesnager, and Y.~Yang, ``Wide-gap frequency
  hopping sequences with no-hit-zone: Bounds and their optimal constructions,''
  \emph{IEEE Trans. Inf. Theory}, vol.~71, no.~5, pp. 3989--3998, May 2025.

\bibitem{XC20}
X.~Niu, C.~Xing, and C.~Yuan, ``Asymptotic {Gilbert-Varshamov} bound on
  frequency hopping sequences,'' \emph{IEEE Trans. Inf. Theory}, vol.~66,
  no.~2, pp. 1213--1218, Feb. 2020.

\bibitem{XC10}
X.~Tang and C.~Ding, ``New classes of balanced quaternary and almost balanced
  binary sequences with optimal autocorrelation value,'' \emph{IEEE Trans. Inf.
  Theory}, vol.~56, no.~12, pp. 6398--6405, Dec. 2010.

\bibitem{WF23}
W.~Ren and F.~Wang, ``A new class of optimal wide-gap one-coincidence
  frequency-hopping sequence sets,'' \emph{Adv. Math. Commun.}, vol.~17, no.~2,
  pp. 342--352, Apr. 2023.

\bibitem{Cao2006Nov}
Z.~Cao, G.~Ge, and Y.~Miao, ``{Combinatorial characterizations of
  one-coincidence frequency-hopping sequences},'' \emph{Des., Codes Cryptogr.},
  vol.~41, no.~2, pp. 177--184, Nov. 2006.
\end{thebibliography}
\end{document}